\documentclass[11pt, letterpaper]{article}

\usepackage{pbg_template}
\usepackage{parskip}
\usepackage{float}
\usepackage{longtable}
\usepackage{array}
\usepackage{caption}
\usepackage[utf8]{inputenc} 
\usepackage[T1]{fontenc}    
\usepackage{hyperref}       
\usepackage{url}            
\usepackage{booktabs}       
\usepackage{amsfonts}       
\usepackage{nicefrac}       
\usepackage{microtype}      
\usepackage{xcolor}         
\usepackage{graphicx}       
\usepackage{subcaption}     

\usepackage[numbers,sort&compress]{natbib}
\usepackage{amsmath}
\usepackage{amssymb}
\usepackage{mathtools}
\usepackage{amsthm}

\usepackage{algorithm}
\usepackage{algpseudocode}
\usepackage{xcolor}

\usepackage[capitalize,noabbrev]{cleveref}

\theoremstyle{plain}
\newtheorem{theorem}{Theorem}[section]

\definecolor{pennred}{RGB}{153,0,0}
\definecolor{pennblue}{RGB}{0, 85, 160}

\hypersetup{
    colorlinks=true,
    linkcolor=pennred,
    citecolor=pennred,
}

\theoremstyle{definition}

\theoremstyle{remark}

\usepackage{booktabs}
\usepackage{tabularx}
\usepackage{longtable}
\usepackage{multirow}
\usepackage{array}
\usepackage[table]{xcolor}

\tcbuselibrary{breakable}
\definecolor{thmbg}{HTML}{fddbc7}

\title{SF-Cluster: Frustration-Guided MSA Subsampling for Alternative Protein Conformation Recovery}

\author{Hanqun Cao,\textsuperscript{1,*} Zijun Gao,\textsuperscript{1,*} Chunbin Gu,\textsuperscript{1} Ge Liu,\textsuperscript{2} Pheng Ann Heng,\textsuperscript{1},    Pranam Chatterjee\textsuperscript{3,4,\dag}

    \vspace{1em}
    \normalfont \small
    \textsuperscript{1}Department of Computer Science and Engineering, The Chinese University of Hong Kong\\
    \textsuperscript{2}Department of Computer Science, University of Illinois at Urbana-Champaign\\
    \textsuperscript{3}Department of Bioengineering, University of Pennsylvania\\
    \textsuperscript{4}Department of Computer and Information Science, University of Pennsylvania\\
    \vspace{0.5em}
    \textit{\textsuperscript{*}Equal contribution}
    \vspace{0.5em}\\
    \textbf{Correspondence:} \href{mailto:pranam@seas.upenn.edu}{\texttt{pranam@upenn.edu}} \\
}

\begin{document}

\maketitle

\begin{abstract}
Deep-learning structure predictors are sensitive to their multiple sequence alignment (MSA) input, making MSA subsampling a practical route to recovering alternative conformations. Existing approaches such as AF-Cluster operate in sequence space, providing limited control over which conformational basin is sampled. We introduce \textbf{SF-Cluster}, which subsamples MSAs using patterns of predicted local energetic frustration, a representation largely independent of sequence similarity. Across a benchmark of 48 cases spanning fold-switching, allosteric, oligomerization-coupled, and intrinsically disordered systems, and using an AF-Cluster-style dual-reference RMSD criterion, SF-Cluster improves target-state recovery of the alternative conformation over AF-Cluster across the two-state classes, with the largest improvement observed for allosteric systems (+15.5 percentage points). The selected MSAs transfer to an architecturally distinct predictor, indicating that the conformational signal resides in MSA composition. Mechanistically, matched-depth controls show that this recovery advantage is largely explained by the effective depth of the selected subsets, which frustration-pattern selection reliably reaches. At the same time, highly frustrated residues are enriched at sites supported by deep mutational scanning and NMR two-state exchange, and frustration covariation is enriched at state-switching contacts while remaining distinct from coevolutionary coupling. Together, these results identify frustration patterns as a transferable representation for conformational prediction and position MSA subsampling as a representation-guided reweighting problem.
\end{abstract}

\begin{figure}[H]
\centering
\includegraphics[width=\textwidth]{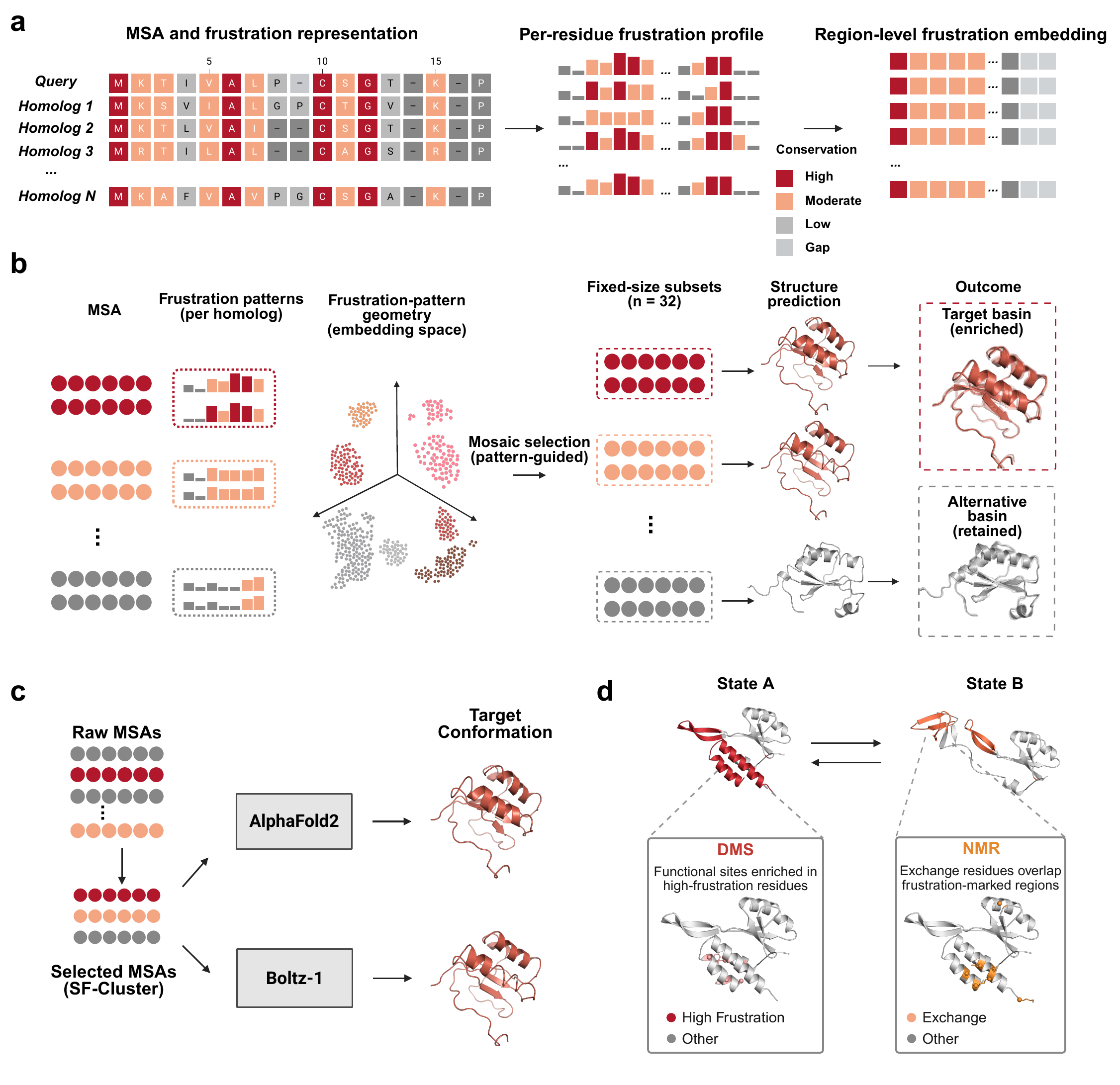}
\caption{\textbf{Overview of SF-Cluster.}
\textbf{(a)} For each homolog in the MSA, a per-residue frustration profile is computed and compressed into a region-level embedding (frustration High to Low; gaps grey).
\textbf{(b)} Homologs are positioned by their geometry in frustration-pattern space, and a single mosaic selection draws fixed-size subsets ($n=32$) spanning diverse modes, intended to enrich the target basin while retaining the alternative (schematic).
\textbf{(c)} The selected MSAs recover the target conformation through two architecturally distinct predictors, AlphaFold2 and Boltz-1, placing the conformational signal in MSA composition rather than one model.
\textbf{(d)} The frustration signal is anchored to independent data: high-frustration residues coincide with deep mutational scanning (DMS) functional sites, and frustration-marked regions overlap NMR two-state exchange residues.}
\label{fig:framework}
\end{figure}

\section{Introduction}

Conformational heterogeneity is central to protein function. Fold-switching proteins, allosteric regulators, and receptor systems depend on transitions between structurally distinct states \cite{porter2018extant, kim2021functional, cao2026tdb}. Despite the success of AlphaFold2~\citep{jumper2021highly,porter2018extant} in single-structure prediction, recovering alternative conformations from sequence alone remains difficult. Predictions are typically biased toward dominant states, and directing inference toward a specific conformational basin generally requires modifications to the standard prediction pipeline~\citep{chakravarty2022alphafold2, chakravarty2024alphafold}.

MSA subsampling has emerged as a practical approach for increasing conformational diversity. Because AlphaFold2 is sensitive to its MSA input, changes to the MSA can alter the conformational basin sampled during prediction~\citep{del2022sampling, stein2022speach_af}. AF-Cluster demonstrated this effect by partitioning an MSA according to sequence similarity and running structure prediction on each subset, showing that cluster-specific MSAs can recover multiple conformations for fold-switching proteins~\citep{wayment2024predicting}. Subsequent methods have refined this strategy through sequence purification and stochastic inference~\citep{xing2025leveraging, bryant2024structure, lee2025large, li2026disentangling}. However, these approaches operate in sequence or perturbation space, where sequence similarity provides limited guidance for selecting MSA subsets associated with a desired conformational state.

Local energetic frustration provides an alternative representation of conformational preference. Frustration quantifies how well residue interactions are satisfied relative to alternative configurations and has been linked to conformational strain, switching behavior, and alternative fold preferences~\citep{leusch2026frustrai,ferreiro2014frustration}. Frustration patterns across homologous sequences therefore contain information that is complementary to sequence similarity. 

Here, we introduce \textbf{SF-Cluster}, an MSA subsampling framework that uses frustration-derived representations to guide state-directed conformational sampling together with a coverage-aware refinement procedure that preserves low-frequency conformations that would otherwise be removed by confidence-based ranking. Across fold-switching benchmarks, SF-Cluster improves recovery of target conformations relative to AF-Cluster. Successful MSA subsets are organized by protein-specific frustration patterns whose geometry differs across systems, indicating that conformational signals encoded within an MSA occupy preferred directions in frustration space. When the input MSA contains only a single structural basin, frustration-guided subsampling does not recover alternative conformations, defining a practical limit of what MSA subsampling can achieve. Together, these results suggest a view of MSA subsampling in which sequence-space exploration, frustration-guided focusing, and refinement of low-frequency states play complementary roles in recovering conformational diversity.

\section{Results}
\label{sec:results}

\subsection{SF-Cluster improves AF-Cluster-style target-state recovery over sequence-space clustering}
\label{sec:recovery}

Sequence similarity provides limited information about conformational state, so MSA clusters built in sequence space, the basis of AF-Cluster~\cite{wayment2024predicting}, often contain homologs associated with different structural basins. We therefore selected MSA subsets according to their geometry in a space of predicted local energetic frustration (Methods). Throughout, SF-Cluster denotes a single mosaic selection on raw per-residue frustration features applied uniformly across all proteins without per-target tuning. We evaluated two prediction tasks separately and never pooled them: two-state conformational recovery on fold-switching, allosteric, and oligomerization-coupled systems, and single-state fold recovery on intrinsically disordered regions. The primary benchmark compares SF-Cluster and AF-Cluster on the two-state task using a dual-reference criterion that requires a prediction to lie within $3$~\AA{} of the target reference, be closer to the target than to the dominant reference, and achieve a mean pLDDT of at least $70$ (Methods). Common-core and loose-RMSD criteria are reported only as structural-proximity sensitivity analyses. During curation we identified and removed two benchmark entries whose paired reference structures corresponded to different proteins (Methods).

SF-Cluster improved target-state recovery over AF-Cluster in every two-state class (Fig.~\ref{fig:recovery}a; \Cref{tab:controls}; SI~\ref{si:impl:controls}). Recovery on allosteric systems increased from $0.273$ to $0.428$, a gain of $15.5$ percentage points and the largest improvement observed across the benchmark. Adenylate kinase provides a representative example of this class, where SF-Cluster recovered the alternative state while AF-Cluster did not~\citep{ferreiro2011role}. Recovery on fold-switching proteins increased from $0.033$ to $0.101$, although this remained the most challenging category. Oligomerization-coupled systems showed the same trend, increasing from $0.008$ to $0.114$, although many of these systems are difficult to evaluate cleanly as two-state recovery problems.

The same ordering was preserved across additional recovery metrics. Under the target-proximity component of the criterion, $39\%$ of SF-Cluster predictions fell within $3$~\AA{} of the target structure compared with $29\%$ for AF-Cluster (Fig.~\ref{fig:recovery}b). SF-Cluster also reduced the best target RMSD achieved per case (Fig.~\ref{fig:recovery}c), increased the fraction of predictions assigned to the target fold by TM-score (Fig.~\ref{fig:recovery}d), and improved per-case recovery rates across the benchmark (Fig.~\ref{fig:recovery}e).

The improvement was consistent across all-$\alpha$, mixed $\alpha$-$\beta$, and all-$\beta$ CATH architectures~\citep{orengo1997cath}, and was generally larger for longer proteins, where sequence-space clustering produces the least controlled subsets. Per-class and per-length analyses are provided in the Supplementary Information. Recovery remained stable across subset sizes near the deployed value of $32$, and per-strategy analyses on the development systems are reported in SI~\ref{app:strategies}.

\begin{figure}[h!]
\centering
\includegraphics[width=0.7\linewidth]{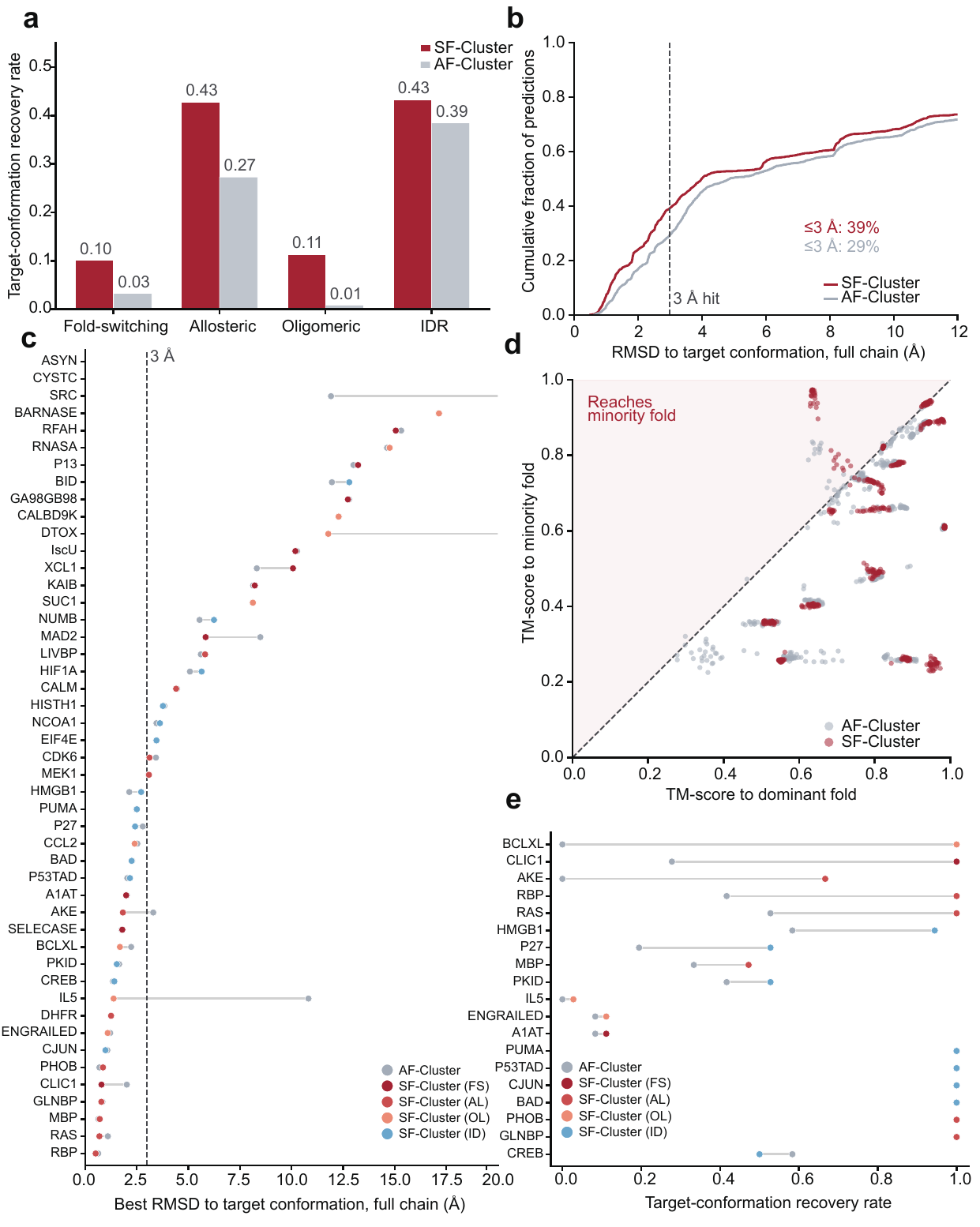}
\caption{\textbf{SF-Cluster improves AF-Cluster-style target-state recovery of the alternative conformation.} Two-state recovery was scored using the AF-Cluster-style dual-reference criterion, requiring a full-chain C$\alpha$ RMSD to the target reference of at most $3$~\AA{}, a lower RMSD to the target than to the dominant reference, and a mean pLDDT of at least $70$ (Methods). \textbf{(a)} Target-state recovery rate by class on the cleaned benchmark. Fold-switching, allosteric, and oligomerization-coupled systems were evaluated using the dual-reference criterion, while intrinsically disordered regions (IDRs) were evaluated using the single-state ordered-form criterion. \textbf{(b)} Cumulative distribution of full-chain C$\alpha$ RMSD to the target reference. Thirty-nine percent of SF-Cluster predictions and $29\%$ of AF-Cluster predictions fall within $3$~\AA{} of the target structure (dashed line). \textbf{(c)} Best full-chain C$\alpha$ RMSD to the target reference achieved for each benchmark case. Dashed line, $3$~\AA{} threshold. \textbf{(d)} TM-score to the target fold versus the dominant fold for individual predictions. The shaded region denotes predictions assigned to the target fold by TM-score. \textbf{(e)} Per-case target-state recovery rates under the dual-reference criterion for AF-Cluster and SF-Cluster, coloured by benchmark class.}
\label{fig:recovery}
\end{figure}

\subsection{The conformational signal resides in the MSA and transfers to a second predictor}
\label{sec:transfer}

The recovery results above were obtained with AlphaFold2, leaving open whether SF-Cluster identifies conformational information encoded in the MSA or exploits properties specific to a single predictor. To distinguish these possibilities, we supplied SF-Cluster-selected MSAs to Boltz-1~\cite{boltz1}, a structure predictor with an architecture distinct from AlphaFold2, on a ten-protein subset spanning the benchmark classes (Fig.~\ref{fig:transfer}a and Fig~\ref{fig:transfer}b). Boltz-1 recovered none of the target conformations from single-sequence input, but recovered four of ten when provided with SF-Cluster-selected MSAs. Because the only quantity transferred between the two predictors is the MSA itself, these results indicate that the conformational signal resides in MSA composition and is not specific to AlphaFold2. They also provide independent support that recovery of alternative conformations is not tied to a particular prediction architecture~\cite{chakravarty2024alphafold}.

A depth-matched random control also transferred to Boltz-1, recovering three of ten target conformations compared with four of ten for SF-Cluster-selected MSAs (McNemar $p = 1.0$). Together, these results indicate that conformational information is encoded in the selected alignments and remains accessible across prediction architectures. The similarity between SF-Cluster and the depth-matched control further suggests that alignment depth contributes substantially to this transferability.

\begin{figure}[h]
\centering
\includegraphics[width=\linewidth]{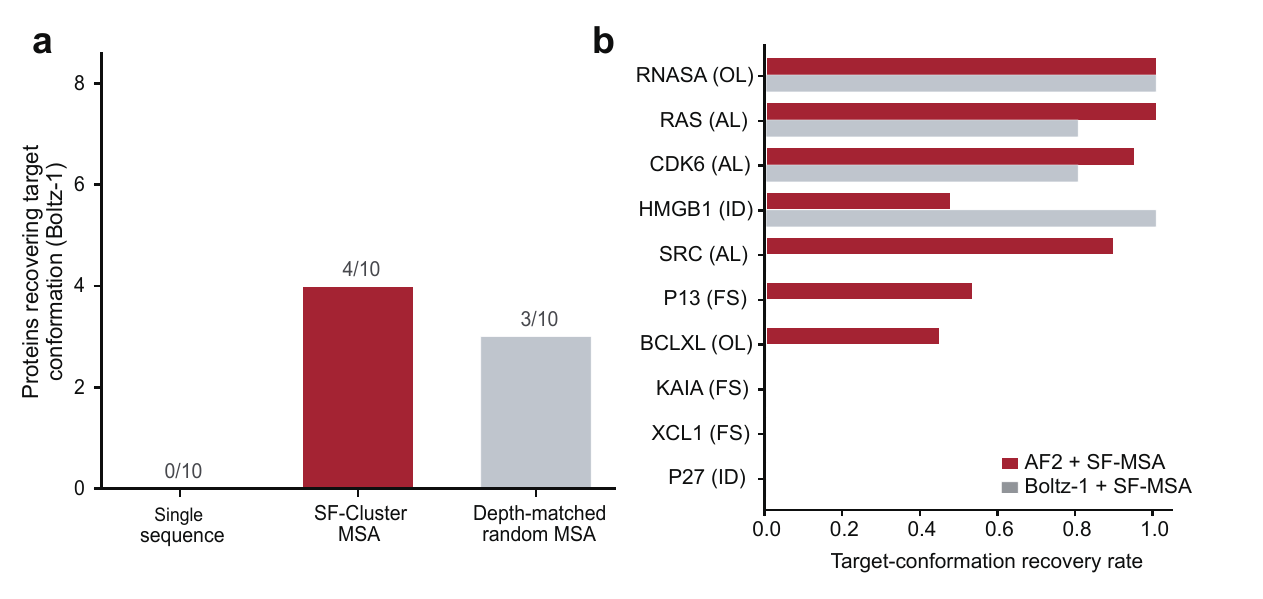}
\caption{\textbf{The selected MSAs carry the conformational signal to a second predictor.} \textbf{(a)} Proteins recovering the target conformation under Boltz-1 ($n=10$): zero from single sequence, four with SF-Cluster MSAs, three with depth-matched random MSAs (SF-Cluster versus depth-matched, McNemar $p=1.0$). \textbf{(b)} Per-protein target recovery under AF2 and Boltz-1 with SF-Cluster MSAs.}
\label{fig:transfer}
\end{figure}

\subsection{Frustration marks functionally and conformationally critical residues}
\label{sec:anchoring}

To understand what information frustration captures, we compared frustration-derived residues against independent experimental measurements of protein function and conformational change. Across both analyses, highly frustrated residues localized to positions associated with functional importance and structural switching.

Against deep mutational scanning data from ProteinGym~\citep{notin2023proteingym}, the residues with the highest frustration scores were enriched at experimentally defined functional sites (Fig.~\ref{fig:representation}a), with a mean enrichment of 3.29-fold across eight assays (median 3.60-fold). Combining per-assay significances yielded a significant overall enrichment (Stouffer $Z = 8.30$, $p = 1.05 \times 10^{-16}$), with significant enrichment observed in five of eight assays, including three SRC datasets~\citep{ahler2019combined,nguyen2023molecular,notin2023proteingym}, DHFR (4.94-fold)~\citep{notin2023proteingym}, and HRAS (3.64-fold)~\citep{bandaru2017deconstruction}. Depth-matched random site sets showed no comparable enrichment. The signal was not universal. No enrichment was observed for calmodulin (0.00-fold)~\citep{weile2017framework} or $\alpha$-synuclein (0.71-fold)~\citep{newberry2020deep}, indicating that frustration preferentially marks functional residues associated with energetic strain rather than functional sites in general.

Frustration also localized to regions that undergo conformational exchange. Across four proteins with NMR two-state exchange measurements, residues identified as switch-confident by SF-Cluster recovered 62.5\% of experimentally observed exchange residues (Fig.~\ref{fig:representation}b). Recall was highest for IscU~\citep{dai2012metamorphic} and MAD2~\citep{luo2004mad2}, intermediate for RfaH~\citep{burmann2012alpha}, and lower for KaiB~\citep{chang2015protein}. Frustration therefore identifies where conformational changes occur, although it does not predict state populations: the correlation between frustration and NMR-derived occupancies was negligible (Pearson $r = 0.06$). Together, these results link frustration to both functional importance and experimentally observed conformational switching.

\begin{figure}[h]
\centering
\includegraphics[width=\linewidth]{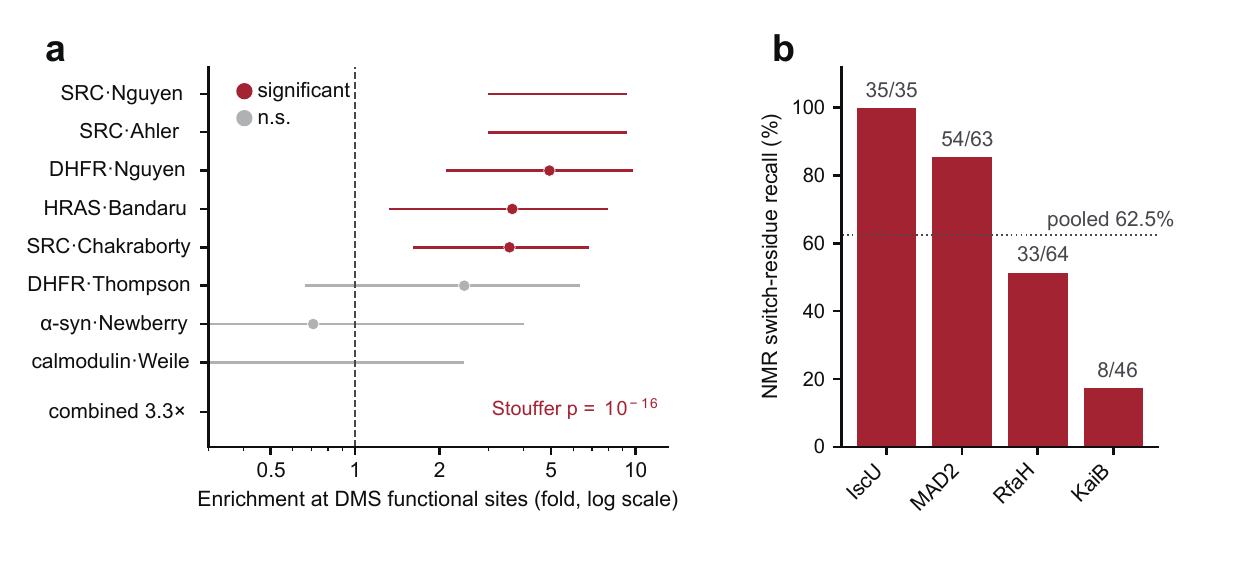}
\caption{\textbf{Frustration marks functional and state-switching residues along an axis distinct from coevolution.} \textbf{(a)} Enrichment of the most-frustrated residues at DMS functional sites across eight assays (red, individually significant; grey, n.s.); mean enrichment $3.3\times$, combined significance Stouffer $p\approx10^{-16}$. \textbf{(b)} Recall of NMR two-state exchange residues by switch-confident frustration, per protein; pooled recall $62.5\%$.}
\label{fig:representation}
\end{figure}

\subsection{Frustration covariation identifies state-switching contacts beyond coevolution}
\label{sec:orthogonality}
We next asked whether frustration covariation captures the same information as classical coevolutionary analyses. Across five proteins, frustration covariation and mean-field direct-coupling analysis shared essentially no variance, with $\rho^2 < 0.001$ in every case and permutation $p \geq 0.15$ (Fig.~\ref{fig:mechanism}a). Frustration covariation and coevolution therefore describe distinct features of homologous sequence variation.

We then examined where each signal localizes within protein structures, scoring enrichment at state-switching contacts, defined as residue-residue interactions that differ between the two reference conformations. Per protein, frustration covariation was enriched at these contacts, significantly so for both KaiB ($p = 0.004$) and RfaH ($p = 1 \times 10^{-4}$), whereas coevolutionary coupling showed no comparable enrichment (Fig.~\ref{fig:mechanism}b). Pooled across proteins, frustration covariation was enriched $1.47$-fold at switching contacts while coevolutionary coupling was depleted there ($0.69$-fold), and neither signal departed appreciably from baseline at the constant contacts shared between the two states ($1.10$-fold and $0.97$-fold; Fig.~\ref{fig:mechanism}c). The distinction was also evident at the level of individual interactions: among the top-ranked frustration-covariation edges, $28$ of the top $100$ in KaiB and $18$ of the top $100$ in RfaH mapped directly to the state-switching interface, where the strongest coevolutionary couplings did not concentrate (Fig.~\ref{fig:mechanism}d,e).

This contact-level signal is also reflected in how the frustration representation selects homologs. For KaiB, AF-Cluster and SF-Cluster draw comparable numbers of sequences from the filtered alignment ($391$ and $394$ of $6{,}821$), but their selections occupy different geometries: in sequence space the two sets overlap the same clusters (Fig.~\ref{fig:mechanism}f), whereas in frustration-pattern space SF-Cluster spreads its picks across the range of per-homolog frustration profiles (Fig.~\ref{fig:mechanism}g). The prediction obtained from this selection adopts the fold-switched (target) conformation rather than the dominant ground-state fold (Fig.~\ref{fig:mechanism}h). Together, these results show that frustration covariation captures information distinct from coevolution, preferentially identifies the contacts that differentiate alternative conformational states, and underlies a selection that recovers the alternative state.
\begin{figure}[h!]
\centering
\includegraphics[width=0.7\linewidth]{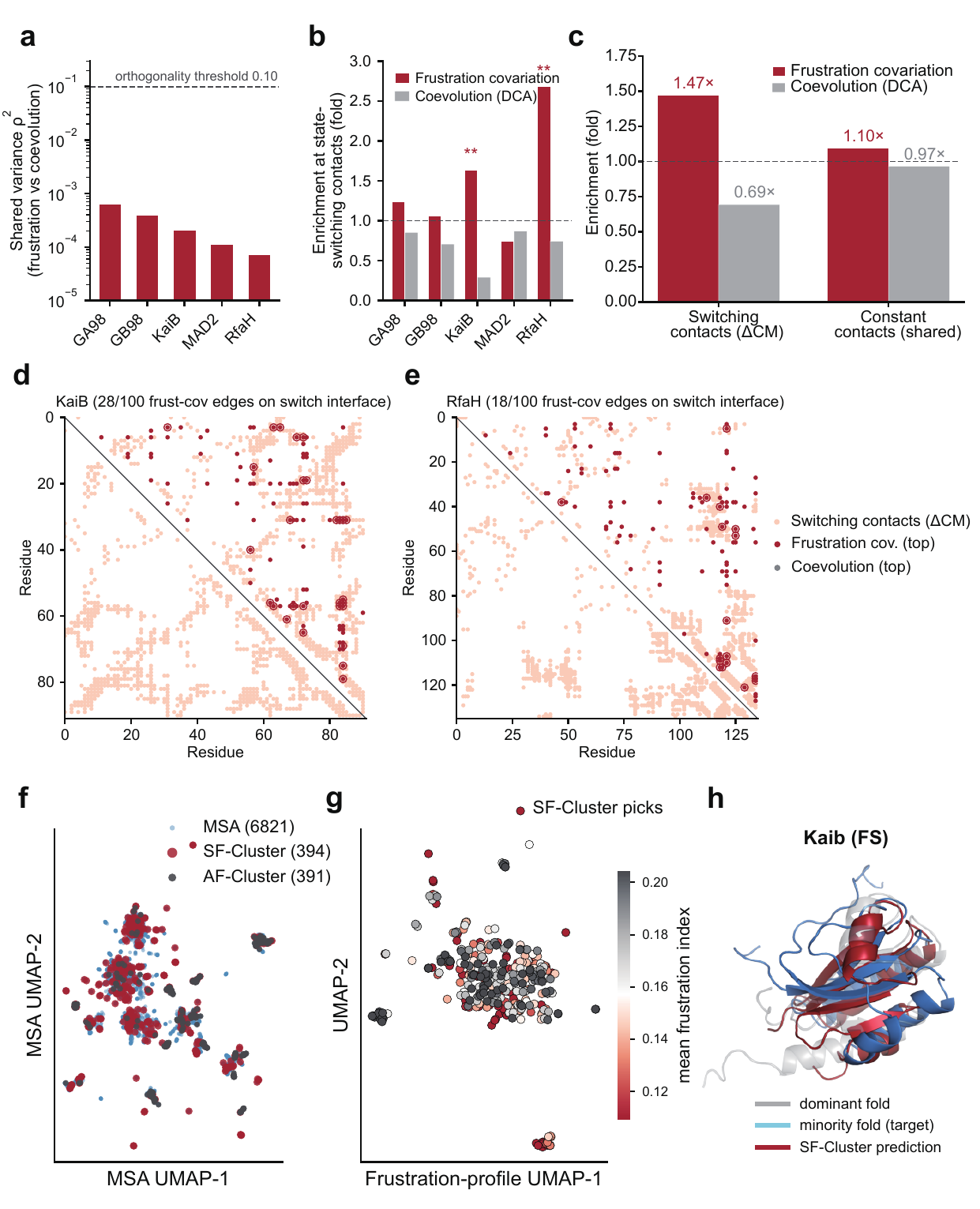}
\caption{\textbf{Frustration covariation targets state-switching contacts that coevolution does not.} \textbf{(a)} Shared variance $\rho^2$ between frustration covariation and coevolution per protein (log scale). \textbf{(b)} Enrichment at state-switching contacts ($\Delta$CM) per protein; significant for frustration on KaiB and RfaH (asterisks). \textbf{(c)} Pooled enrichment at switching versus constant contacts (frustration $1.47\times$ at switching contacts, coevolution $0.69\times$). \textbf{(d,e)} KaiB and RfaH contact maps with top frustration-covariation edges (red) on the switching set ($28/100$ and $18/100$). \textbf{(f--h)} KaiB SF-Cluster picks in sequence space (f) and frustration space (g), and the prediction overlaid on both folds (h).}
\label{fig:mechanism}
\end{figure}

\subsection{SF-Cluster places MSA subsets in a transferable, depth-controlled regime}
\label{sec:dissection}
Finally, we asked what property of the selected MSAs drives the recovery gains of SF-Cluster. Across the benchmark, target-state recovery increased with effective subset depth and saturated near Neff$_{80}\approx30$ for SF-Cluster, depth-matched random, and FI-shuffled subsets, whereas AF-Cluster frequently produced shallower subsets that fell below this regime (Fig.~\ref{fig:depth}a). The recovery difference between SF-Cluster and depth-matched random subsets was centered near zero (Fig.~\ref{fig:depth}b), and aggregate recovery was statistically indistinguishable from the Neff-matched, depth-matched random, FI-shuffled, and coevolution-matched controls while remaining substantially higher than AF-Cluster (Fig.~\ref{fig:depth}c).

The selected subsets differed systematically from those produced by AF-Cluster. SF-Cluster increased mean Neff$_{80}$ from 16.6 to 30.2 (Fig.~\ref{fig:depth}d), reduced within-subset sequence redundancy (Fig.~\ref{fig:depth}e), increased sequence diversity (Fig.~\ref{fig:depth}g), and maintained comparable similarity to the query sequence and comparable switch-region coevolutionary signal (Fig.~\ref{fig:depth}f,h). AF-Cluster instead fragmented large alignments into many small clusters, often preventing individual subsets from reaching the depths associated with successful recovery.

Shuffling frustration indices preserved subset depth almost exactly ($r=0.997$, $n=60$; Fig.~\ref{fig:depth}i) and recovered target conformations at nearly the same rate as SF-Cluster. Effective depth therefore explains most of the aggregate recovery advantage. At matched depth, however, SF-Cluster retained a residual advantage on allosteric proteins and on individual fold-switching systems whose frustration geometry aligned well with the mosaic selection strategy (\Cref{tab:controls}). Frustration-pattern selection thus serves primarily as a mechanism for generating deep, diverse, transferable subsets, while contributing additional state-specific information in a subset of systems.

Together with the residue- and contact-level analyses above, these results place frustration in two roles within SF-Cluster: a selection representation that reliably reaches the depth regime associated with conformational recovery, and a state-resolved descriptor that marks functional residues, conformational exchange regions, and state-switching contacts.
\begin{figure}[!h]
\centering
\includegraphics[width=0.7\linewidth]{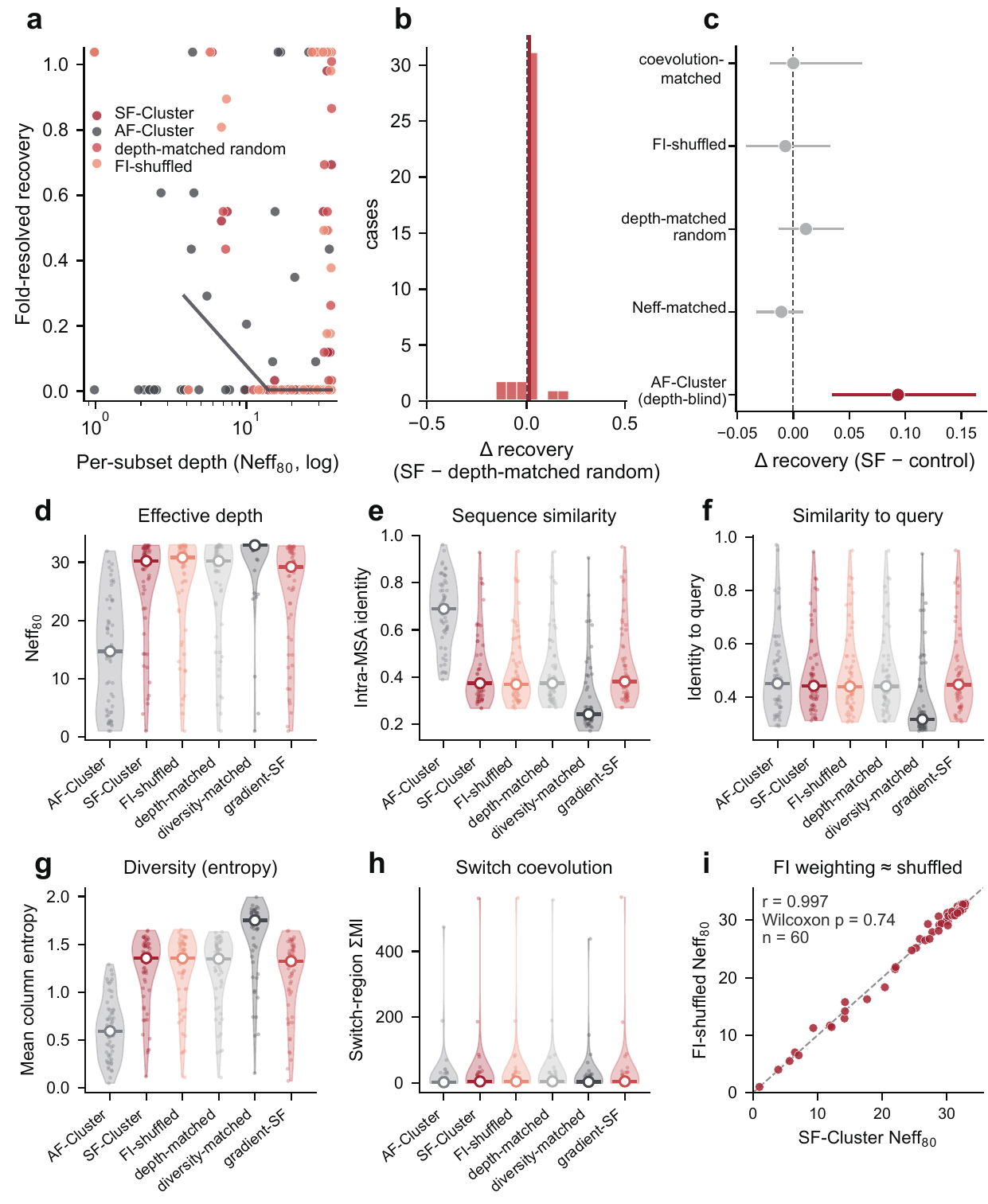}
\caption{\textbf{Recovery is governed by per-subset effective depth.} \textbf{(a)} Target-state recovery (AF-Cluster-style dual-reference endpoint) versus per-subset effective depth (Neff$_{80}$, log scale); recovery rises with effective depth and saturates near Neff$_{80}\approx30$ for SF-Cluster, depth-matched random and FI-shuffled subsets, while depth-blind AF-Cluster subsets fall below this regime. \textbf{(b)} Per-case recovery difference, SF-Cluster minus depth-matched random, centred near zero. \textbf{(c)} SF-Cluster beats AF-Cluster only because AF-Cluster ignores depth: once depth or Neff is controlled for, SF-Cluster matches every baseline, showing the gain comes from effective subset depth, not frustration geometry.
\textbf{(d)} Effective depth at 80\% identity (Neff$_{80}$); \textbf{(e)} mean intra-MSA pairwise identity; \textbf{(f)} mean identity of subset sequences to the query; \textbf{(g)} mean per-column Shannon entropy; \textbf{(h)} switch-region coevolution ($\Sigma$MI), summed mutual information over switch-region column pairs; \textbf{(i)} per-protein Neff$_{80}$ of SF-Cluster versus the FI-shuffled control with the $y=x$ reference line.
}
\label{fig:depth}
\end{figure}

\section{Discussion}

In this work, we introduce SF-Cluster, a frustration-aware MSA subsampling framework that selects sequence subsets in a structure-relevant pattern space rather than by sequence similarity (Fig.~\ref{fig:framework}). Across a benchmark spanning fold-switching, allosteric, oligomerization-coupled, and intrinsically disordered systems, SF-Cluster improves AF-Cluster-style recovery of alternative conformations and transfers this signal to an architecturally distinct predictor through the selected MSA alone (Figs.~\ref{fig:recovery}, \ref{fig:transfer}). Together, these results suggest that MSA subsampling depends less on how sequences are partitioned and more on the representation used to organize the alignment. In this view, MSA subsampling becomes a representation-guided reweighting problem, where the representation determines which conformational states become accessible through subsampling (Theorem~\ref{thm:reweighting}).

Interestingly, matched controls place most of the recovery advantage in effective subset depth rather than in frustration geometry itself. SF-Cluster consistently generates deep, diverse subsets that occupy the regime where recovery saturates, whereas AF-Cluster frequently fragments the same alignment into shallower subsets. At the same time, frustration retains value beyond subset selection. It identifies functional residues, localizes to experimentally observed exchange regions, and enriches for state-switching contacts along an axis distinct from coevolution. We therefore find that frustration serves both as a practical selection representation and as a state-resolved descriptor of conformational variation.

Our Mpt53 benchmark also defines a practical boundary for MSA subsampling. When the underlying sequence pool encodes a single structural basin, neither AF-Cluster nor any SF-Cluster variant recovers an alternative state (Fig.~\ref{fig:si_boundaries}). Subsampling can redistribute probability among conformations already represented in an alignment, but it cannot recover states absent from the sequence pool, consistent with Theorem~\ref{thm:focusing}(b). Coverage-aware refinement addresses a complementary limitation by preserving low-frequency states that would otherwise be discarded during confidence-based ranking.

Looking forward, learned state-aware representations~\citep{leusch2026frustrai, bryant2024structure, lee2025large} may capture conformational signals beyond those available from a single frustration field, while generative MSA augmentation~\citep{zhang2023unsupervisedly, zhang2024msa, chen2024msagpt, cao2025plame, sgarbossa2023generative, venkatraman2025msaflow} and template-guided approaches may introduce structural diversity that no row-subsetting strategy can recover on its own. More broadly, SF-Cluster suggests that the central challenge in MSA subsampling is not generating additional predictions, but selecting informative subsets. We expect this perspective to extend beyond frustration to other learned representations of protein state and dynamics.

\section{Methods}

\subsection{Problem setup}

Let $\mathcal{A} = \{\mathbf{s}_i \in \mathcal{V}^L\}_{i=1}^N$ be a multiple sequence alignment (MSA) of $N$ homologs of length $L$ over an alphabet $\mathcal{V}$. A structure predictor $f$ maps a subset $\mathcal{A}' \subseteq \mathcal{A}$ to a predicted structure $\mathbf{X} = f(\mathcal{A}', \xi)$, where $\xi$ captures stochastic inference, so $f$ induces a distribution $P(\mathbf{X} \mid \mathcal{A}')$ over structure space. We partition structure space into conformational basins $\mathcal{B} = \{B_1,\dots,B_S\}$ with basin probabilities
\begin{equation}
    P(B_s \mid \mathcal{A}') = \int_{B_s} P(\mathbf{X} \mid \mathcal{A}')\, d\mathbf{X}.
\end{equation}
Targeted conformational recovery seeks a small family of subsets $\{\mathcal{A}'_m\}$ that jointly sample a target basin $B_{s^*}$ in addition to the dominant basin, without exhaustively enumerating subsets. This framework describes the two-state case ($S=2$); the single-state recovery task defined below is the degenerate instance in which only one basin is experimentally defined. The formal reweighting analysis, including the focusing and impossibility results invoked in the Discussion, is developed in SI~\ref{si:theory}.

\subsection{Frustration-pattern representation of an MSA}
\label{sec:representation}

Sequence-clustering methods such as AF-Cluster partition $\mathcal{A}$ by sequence similarity, implicitly assuming that the basin $Y_i \in \{1,\dots,S\}$ of homolog $i$ is recoverable from its sequence distances to the other homologs, $P(Y_i \mid \mathcal{A}) \approx P\big(Y_i \mid \{d_{\mathrm{seq}}(\mathbf{s}_i,\mathbf{s}_j)\}_j\big)$. This assumption is weak when a few substitutions reshape fold-switch energetics while leaving global sequence identity nearly unchanged, as in the GB98 T25I/L20A control, where two substitutions ablate the fold-switch signature without substantially altering sequence similarity to wild-type.

We instead represent each homolog by a per-residue frustration profile $F_i \in \mathbb{R}^L$. Local energetic frustration~\citep{ferreiro2007localizing} measures how energetically favourable a residue's interactions are relative to alternative local configurations; residues that are highly or variably frustrated across homologs tend to coincide with hinges, interfaces, and switch regions whose energetic balance differs between states~\citep{freiberger2019local}. Collecting these profiles gives the frustration representation
\begin{equation}
    \Phi(\mathcal{A}) = \{F_i\}_{i=1}^N .
\end{equation}
SF-Cluster selects subsets according to geometric structure in $\Phi(\mathcal{A})$ rather than in sequence space. We compute $F_i$ directly from sequence with FrustrAI-Seq~\citep{leusch2026frustrai}, a previously described sequence-based frustration predictor built on a ProtT5-XL encoder, used here as released without retraining; SF-Cluster uses its per-residue frustration index, whereas the accompanying class label and surprisal score are not used here. Model loading, weight handling, and inference settings are given in SI~\ref{si:impl:frustration}.

\subsection{SF-Cluster: subset selection by frustration-pattern geometry}
\label{sec:pattern_sf}

To make selection tractable we summarise each profile as a region-level embedding. Residues are grouped into $K$ regions $\mathcal{R} = \{R_k\}_{k=1}^K$, and the core descriptor of homolog $i$ is its mean frustration per region,
\begin{equation}
    \psi_0(\mathbf{s}_i) = [\mu_{i1}, \dots, \mu_{iK}], \qquad \mu_{ik} = \frac{1}{|R_k|}\sum_{j \in R_k} F_{ij},
\end{equation}
where $F_{ij}$ is the raw (non-normalised) frustration index at residue $j$ of homolog $i$. The deployed feature vector $\psi(\mathbf{s}_i)$ concatenates these region means with within-region variances and between-region frustration contrasts; exact feature definitions and region boundaries are listed in SI~\ref{si:impl:features}. SF-Cluster selects subsets by a single mosaic operation on $\{\psi(\mathbf{s}_i)\}$: homologs are drawn to span diverse modes of the frustration-pattern distribution, so that each subset carries high within-subset frustration variance. This design is intended to reduce the energetic homogeneity of subsets dominated by the major basin and thereby expose alternative basins to the predictor. This produces $M$ subsets of fixed size, with the query sequence prepended to every subset; subset counts, sizes, and the sampling rule are specified in SI~\ref{si:impl:subsampling}. Throughout, ``SF-Cluster'' refers to this mosaic frustration-pattern selection, applied uniformly to all proteins. Alternative geometric operations on $\psi$ (directional, contrastive, and clustering variants) and residue-level frustration normalisation are evaluated as ablations in SI~\ref{app:strategies}. Representative selections in sequence space and in frustration-pattern space, together with the resulting predictions, are shown for four cases in Fig.~\ref{fig:si_geometry}.

\subsection{Structure prediction and refinement}
\label{sec:refine}

All structures are predicted with AlphaFold2 via the ColabFold~\citep{mirdita2022colabfold} batch interface, using custom-MSA input, with templates, relaxation, and dropout all disabled. Disabling dropout is applied uniformly to every method, including AF-Cluster, so that differences reflect the choice of subsets rather than stochastic ensembling; this is a controlled setting rather than a constraint specific to the baseline. Prediction follows a two-stage screen-and-refine protocol. In the screen stage, every subset is run with one model and two seeds and scored by the mean pLDDT of its highest-confidence structure, and the top $K$ subsets, with $K = \max(4, \min(16, \lceil 0.2\,n_{\text{subsets}} \rceil))$, are promoted to the refine stage, where each is re-run with five models and four seeds. Both stages use three recycling iterations, and refined predictions across promoted subsets are aggregated into the final ensemble. This global pLDDT-ranked refinement is used for all benchmark results reported here. A coverage-aware variant enforces a per-basin quota during promotion to retain low-confidence rare-state predictions; it requires a screen stage deep enough to sample the rare basin, and is described and evaluated in SI~\ref{si:impl:af2} and~\ref{si:impl:refinement} but is not used for the main benchmark. Full prediction and refinement parameters are given in SI~\ref{si:impl:af2}.

\subsection{Experimental setup}
\label{sec:setup}

\paragraph{Benchmarks.} We evaluate SF-Cluster on 48 evaluable cases spanning two distinct prediction tasks. The first is two-state conformational recovery: 33 cases with two experimentally determined reference structures, comprising 11 fold-switching, 12 allosteric and 10 oligomerization-coupled systems, each requiring recovery of a target basin distinct from the dominant one. The fold-switching systems KaiB and the GA98/GB98 pair belong to this set; GB98 T25I/L20A is a designed negative control in which two substitutions ablate the fold-switch signature without substantially altering MSA composition, and is assessed separately (Fig.~\ref{fig:si_boundaries}) rather than counted among the 48. The second is single-state fold recovery for intrinsically disordered regions: 15 cases for which only the bound, ordered conformation is experimentally defined and no second basin exists. The two tasks are scored and reported separately throughout and are never pooled into a single hit rate, and the headline comparison against AF-Cluster is computed on the two-state recovery task. These 48 cases are the evaluable subset of a larger initial assembly of 60 candidate systems; per-protein construction, reference-structure assignment, exclusion of non-evaluable systems, MSA generation, and architecture stratification are described in SI~\ref{si:impl:datasets} and~\ref{si:impl:msa}.

\paragraph{Baselines and controls.} We compare against full-MSA prediction and AF-Cluster~\citep{wayment2024predicting}. To separate the contribution of subset selection from that of subset depth, we add four matched controls: an effective-depth (Neff)-matched control, a depth-matched random control, a frustration-index-shuffled (FI-shuffled) control that preserves subset depth (Fig.~\ref{fig:depth}i) while destroying the frustration pattern, and a coevolution-matched control; their construction is given in SI~\ref{si:impl:controls}. Per-subset MSA properties for all methods and controls, including effective depth, intra-MSA identity and column entropy, are compared in Fig.~\ref{fig:depth}d--i. Across these controls, recovery is governed by effective depth rather than frustration geometry (Fig.~\ref{fig:depth}c), as analysed in \Cref{sec:dissection}. All methods share the same AlphaFold2 backend and screen-and-refine protocol.

\paragraph{Evaluation.} 
For two-state recovery, we used an AF-Cluster-style dual-reference RMSD criterion. Each prediction was independently aligned to the dominant reference state and to the target reference state over the full set of residues shared between the prediction and the corresponding reference. A prediction was counted as a target-state hit only when three conditions were jointly satisfied: (i) the full-chain $\text{C}\alpha$ RMSD to the target reference was at most $3\,\text{\AA}$; (ii) the prediction was closer to the target reference than to the dominant reference; and (iii) the mean pLDDT over the evaluated structure was at least $70$. This dual-reference criterion makes the state assignment mutually exclusive and prevents a prediction that overlaps the target only through a shared structural core from being counted as target-state recovery.

For fold-switching systems, we additionally report switch-region and TM-score-based assignments~\citep{zhang2005tm} as sensitivity analyses. The switch-region analysis asks whether the state-discriminating region is closer to the target than to the dominant state, whereas the TM-score analysis asks whether the global structural assignment is unchanged under a scale-normalised similarity measure. Common-core and loose RMSD criteria are reported only as fold-agnostic structural-proximity upper bounds, not as the primary recovery endpoint. For intrinsically disordered regions, only one ordered reference state exists; these cases are evaluated as single-state ordered-form recovery using RMSD to the bound reference and mean pLDDT, and are reported separately from two-state recovery.

\paragraph{Reproducibility details.} The complete evaluation protocol and per-class hit criteria are specified in SI~\ref{si:impl:eval}; all method and control hyperparameters are consolidated in SI~\ref{si:impl:hparams}; and computational-resource requirements and runtimes are reported in SI~\ref{si:impl:compute}.

\section*{Data and Code Availability}
All data code required to evaluate SF-Cluster is publicly available at \url{https://huggingface.co/ChatterjeeLab/SF-Cluster}. The repository includes an interactive Google Colab notebook for running SF-Cluster on user-defined targets.

\section*{Acknowledgments}
This research was supported by a grant from the High-throughput Institute for Discovery (HIT-ID) at the University of Pennsylvania to the lab of Pranam Chatterjee. The work described in this paper was also partially supported by the Research Grants Council of the Hong Kong Special Administrative Region, China, under Project T45-401/22-N.

\section*{Author Contributions Statement}
H.C., Z.G., and P.C. conceived the study and designed the SF-Cluster approach. H.C. and Z.G. developed the method, implemented the computational pipeline, and carried out all computational experiments and analyses. C.G. and G.L. assisted with a subset of the experiments and contributed to the interpretation and discussion of the results. P.A.H. and P.C. supervised the project. All authors discussed the results and reviewed and approved the final manuscript.

\section*{Competing Interests Statement}
P.C. is a co-founder of Gameto, Inc., UbiquiTx, Inc., AtomBioworks, Inc., and Recognition Bio, Inc., and advises companies involved in biologics and therapeutic development. P.C.'s interests are reviewed and managed by the University of Pennsylvania in accordance with its conflict-of-interest policies. The remaining authors declare no competing interests.

\bibliography{sfcluster}

\clearpage
\appendix
\renewcommand{\thesection}{S\arabic{section}}
\setcounter{section}{0}
\renewcommand{\thesubsection}{S\arabic{section}.\arabic{subsection}}
\renewcommand{\thefigure}{S\arabic{figure}}
\setcounter{figure}{0}
\renewcommand{\thetable}{S\arabic{table}}
\setcounter{table}{0}
\renewcommand{\theequation}{S\arabic{equation}}
\setcounter{equation}{0}

\begin{center}
{\huge\bfseries\color{PennBlue} Supplementary Information}
\end{center}
\vspace{1em}

\section{Implementation Details}
\label{si:impl}

\subsection{Matched-control construction}
\label{si:impl:controls}

All controls share the AlphaFold2 backend, the screen-and-refine protocol and the per-arm subset count and size used by SF-Cluster (12 subsets of 32 sequences; \Cref{si:impl:subsampling}), so that they differ from SF-Cluster only in how sequences are assigned to subsets. The query sequence is prepended to every subset in every control, and all random draws use the fixed seed 0.

\paragraph{Depth-matched random.} For each SF-Cluster subset we record its number of retained sequences and draw a random subset of the same size uniformly without replacement from the filtered MSA. This matches the nominal per-subset depth of SF-Cluster while removing any frustration-based or sequence-based structure in the selection.

\paragraph{Effective-depth (Neff)-matched.} Random subsets are drawn and accepted only if their effective depth, measured as Neff$_{80}$ under the reweighting of \Cref{si:impl:msa}, falls within $\pm10\%$ of the SF-Cluster target ($\mathrm{TOL}=0.10$); up to 50 independent random draws are attempted, after which the closest candidate is retained, with a Hobohm-1 80\%-identity clustering fallback if no draw falls within tolerance. Because uniform random subsets of fixed size and SF-Cluster subsets can differ in redundancy, this control matches the realised information content of the subset rather than its nominal size.

\paragraph{Frustration-index-shuffled (FI-shuffled).} Starting from the SF-Cluster subsets, the per-residue frustration index values are permuted across alignment columns (positions), with a single permutation applied per subset and shared across all homologs rather than drawn independently per homolog, so that amino-acid composition and frustration-index magnitudes are preserved while the position-to-FI correspondence is destroyed, before the region-level features of \Cref{si:impl:features} are recomputed and the mosaic selection is re-run. This destroys the frustration pattern that drives selection while leaving the underlying sequence pool, and therefore the attainable subset depth, unchanged; the resulting subsets have per-case Neff$_{80}$ essentially identical to SF-Cluster (Pearson $r=0.997$, paired Wilcoxon $p=0.74$; median Neff$_{80}$ $30.2$ versus $30.8$ over $n=60$ proteins; Fig.~\ref{fig:depth}i).

\paragraph{Coevolution-matched.} As a control for the contact-level analysis of \Cref{sec:orthogonality}, subsets are selected to match the switch-region coevolutionary signal ($\Sigma$MI; Fig.~\ref{fig:depth}h) of the SF-Cluster subsets by ranking candidate random subsets by their joint closeness in effective depth (Neff$_{80}$) and switch-region coevolution richness (APC-corrected mean mutual information over the top 2\% of column pairs with $|i-j|\ge5$) to the SF-Cluster values, accepting the first candidate whose MI richness and Neff$_{80}$ both fall within $\pm10\%$ of target and otherwise retaining the candidate minimising the summed relative distance (budget: 3{,}000 Neff draws, 200 MI evaluations). This isolates whether recovery is driven by coevolutionary coupling captured in the subset rather than by frustration geometry or depth.

The mean per-case recovery difference between SF-Cluster and each control is reported in \Cref{fig:depth}c, and the corresponding per-class recovery values are tabulated in \Cref{tab:controls}. The coevolution-matched arm could be constructed only on a smaller, different set of matched cases (3 allosteric and 4 oligomerization-coupled cases) and is therefore not row-comparable to the other arms; on those cases its recovery was statistically indistinguishable from SF-Cluster (paired Wilcoxon $p=0.50$), so it is reported here rather than in \Cref{tab:controls}.

\begin{table}[H]
\caption{Per-class recovery on the cleaned benchmark under the AF-Cluster-style dual-reference RMSD endpoint. A two-state prediction is counted as a target-state hit only when its full-chain RMSD to the target reference is at most 3~\AA, its RMSD to the target reference is lower than its RMSD to the dominant reference, and its mean pLDDT is at least 70. SF-Cluster exceeds the depth-blind AF-Cluster in every two-state class, with the clearest gain on allosteric systems. Most of this advantage over AF-Cluster is mediated by effective subset depth: the depth-matched random control reaches broadly comparable recovery, and in aggregate the two are statistically indistinguishable (\Cref{fig:depth}c). The depth account is not complete, however; the SF-Cluster minus depth-matched random column shows a residual advantage of $+6.3$ percentage points on allosteric systems, alongside a near-zero difference on oligomerization-coupled systems and a slightly negative one on fold-switching, so frustration selection contributes beyond depth on some classes but not uniformly.}
\label{tab:controls}
\centering
\small
\setlength{\tabcolsep}{5pt}
\resizebox{\textwidth}{!}{%
\begin{tabular}{lcccccc}
\toprule
Class & $n$ & SF-Cluster & AF-Cluster & Depth-matched random & SF--AF & SF--Depth-random \\
\midrule
Fold-switching & 11 & 0.101 & 0.033 & 0.114 & +0.068 & -0.013 \\
Allosteric & 12 & 0.428 & 0.273 & 0.366 & +0.155 & +0.063 \\
Oligomerization-coupled & 10 & 0.114 & 0.008 & 0.111 & +0.106 & +0.003 \\
\bottomrule
\end{tabular}%
}
\end{table}

\subsection{Pattern-SF strategy comparison and signal attribution}
\label{sec:attribution}
\label{app:strategies}

The main text fixes SF-Cluster to a single mosaic strategy applied uniformly to every protein (\Cref{sec:pattern_sf}). Here we report, as ablations, the per-protein behaviour of the alternative pattern-space strategies and the signal-attribution controls that identify which property of the frustration features carries the effect.

\paragraph{Strategy comparison.} The four pattern-space strategies of \Cref{si:impl:subsampling} (region-cluster, contrast, mosaic, gradient) differ in how they traverse frustration-pattern space. Their recovery on the development fold-switching systems, scored under the looser common-core/switch-region structural-proximity criterion used for these strategy ablations, is given in \Cref{tab:pattern_performance}; these structural-proximity rates are a fold-agnostic upper bound and are not directly comparable to the AF-Cluster-style dual-reference recovery reported in the main text and in \Cref{tab:controls}. The optimal strategy is protein-specific: mosaic is best on KaiB (0.950) and GA98 (0.925), where the conformational signal is distributed across multiple frustration regions, whereas gradient is best on GB98 (0.500), where the signal lies along a single N-to-C directional axis that mosaic dilutes (mosaic 0.188). Allowing the strategy to vary per protein therefore improves recovery on GB98 (gradient 0.500 versus mosaic 0.188) while leaving KaiB and GA98 unchanged, where mosaic is already optimal. Because this improvement is anisotropic and case-specific, we report it here rather than as the headline result, so that no per-protein tuning enters the main comparison against AF-Cluster.

\begin{table}[H]
\caption{Pattern-SF strategy comparison on the development fold-switching systems. Values are common-core/switch-region structural-proximity recovery rates over the $n=80$ refine-stage predictions per condition, a fold-agnostic upper bound used here only to compare strategies and not the AF-Cluster-style dual-reference endpoint of the main text (\Cref{tab:controls}). Mosaic, applied uniformly, is the strategy used in the main text. Mpt53 is a single-basin discovery case and is excluded (\Cref{fig:si_boundaries}).}
\label{tab:pattern_performance}
\centering
\small
\begin{tabular}{lcccc}
\toprule
Protein & Region-cluster & Contrast & Mosaic & Gradient \\
\midrule
KaiB  & 0.538 & 0.675 & 0.950 & 0.650 \\
GA98  & 0.500 & 0.450 & 0.925 & 0.500 \\
GB98  & 0.450 & 0.075 & 0.188 & 0.500 \\
\bottomrule
\end{tabular}
\end{table}

\paragraph{Signal attribution.} To identify which property of the frustration features carries the signal, we compare the raw-FI features used throughout against two perturbations: residue-normalised features, in which the per-column mean FI is subtracted before aggregation (\Cref{si:impl:features}), preserving the spatial pattern while discarding raw amplitudes; and the FI-shuffled selection (\Cref{si:impl:controls}), which destroys the position-to-FI correspondence. Recovery in this attribution analysis is scored under the same common-core/switch-region structural-proximity criterion as the strategy comparison (\Cref{tab:pattern_performance}) and is not directly comparable to the AF-Cluster-style dual-reference recovery of the main text. Residue-normalised features underperform raw features on all three systems (\Cref{tab:attribution}), indicating that the magnitude of the frustration signal, and not only its spatial pattern, contributes to selection. Under the FI-shuffled selection, recovery falls on KaiB (0.950 to 0.537) and GA98 (0.925 to 0.475), indicating that the positional frustration pattern contributes to subset selection on these systems, but rises on GB98 (0.188 to 0.500), where the raw mosaic strategy is already suboptimal (\Cref{tab:pattern_performance}) and removing the positional structure exposes a composition- and depth-driven signal. Because shuffling does not abolish recovery, and on GB98 increases it, we attribute the effect to a combination of positional frustration structure and the underlying sequence composition and depth rather than to positional coupling alone, and we make no claim that the position-to-FI correspondence is necessary for recovery.

\begin{table}[H]
\caption{Signal-attribution controls on the development fold-switching systems. Values are common-core/switch-region structural-proximity recovery rates over the $n=80$ refine-stage predictions per condition (the same fold-agnostic upper bound as \Cref{tab:pattern_performance}, not the AF-Cluster-style dual-reference endpoint of the main text). The raw-FI column is identical to the mosaic column of \Cref{tab:pattern_performance}.}
\label{tab:attribution}
\centering
\small
\begin{tabular}{lccc}
\toprule
Protein & Raw FI (main) & Residue-normalised & FI-shuffled$^{a}$ \\
\midrule
KaiB  & 0.950 & 0.763 & 0.537 \\
GA98  & 0.925 & 0.838 & 0.475 \\
GB98  & 0.188 & 0.050 & 0.500 \\
\bottomrule
\end{tabular}
\end{table}
\noindent\footnotesize $^{a}$ FI-shuffled values are the mosaic-tertile FI-shuffle control: a single per-subset permutation of the per-residue frustration vector across alignment columns, followed by the same Mosaic-SF selection and $n=80$ refine-stage common-core/switch-region structural-proximity scoring as the raw column. \normalsize

\subsection{Frustration index computation}
\label{si:impl:frustration}

Per-residue frustration indices are computed with FrustrAI-Seq~\citep{leusch2026frustrai}, a sequence-based frustration predictor built on a ProtT5-XL encoder. FrustrAI-Seq was trained to predict frustration output from Frustratometer~\citep{jenik2012protein,parra2016protein}. For each homolog in the filtered MSA, the ungapped sequence is passed to the model, which returns a per-residue frustration index (FI), a class label and a surprisal score. Sequences containing non-canonical amino acids are excluded and filled with NaN in all downstream matrices.

The per-residue vectors are projected back to alignment coordinates using each sequence's uppercase-to-column mapping from the A3M format, yielding FI, entropy and surprisal matrices of shape $N \times L$, where $N$ is the number of homologs and $L$ the alignment length; gap positions receive NaN. Inference runs with batch size 1 on a single NVIDIA A100-SXM4-80GB GPU under \texttt{torch.no\_grad()}.

To identify positions of high evolutionary and energetic variation, each alignment column $c$ is scored by
\begin{equation}
    s_c = \operatorname{Var}(\mathrm{FI}_c) \cdot (1 - \overline{\mathrm{entropy}}_c) \cdot (1 + \overline{|\mathbf{z}|}_c),
\end{equation}
where $\operatorname{Var}(\mathrm{FI}_c)$ is the across-sequence variance of FI at column $c$, $\overline{\mathrm{entropy}}_c$ the mean per-sequence entropy and $\overline{|\mathbf{z}|}_c$ the mean absolute surprisal. The top 30 columns by this score are retained as candidate switch columns for the outlier-guided arm.

\subsection{Pattern-level feature representation}
\label{si:impl:features}

Region-level embeddings compress each homolog's FI profile into a fixed-dimensional feature vector using two complementary segmentation schemes. \textbf{Structural segmentation} divides the alignment columns into an N-terminal half (columns 1 to $\lfloor L/2 \rfloor$) and a C-terminal half (columns $\lfloor L/2 \rfloor + 1$ to $L$). \textbf{Variance-based segmentation} designates the top 20\% of columns by across-sequence FI variance as high-variance positions and the remaining 80\% as low-variance positions. Per-sequence features are then computed from these partitions.

\begin{table}[H]
\caption{Per-sequence features derived from the frustration index matrix. All means and variances exclude gap and invalid positions. Feature vectors are standardised (zero mean, unit variance) before clustering.}
\label{tab:si:features}
\centering
\small
\begin{tabular}{ll}
\toprule
Feature & Definition \\
\midrule
$\bar{\mathrm{FI}}_\text{N}$ & Mean FI over N-terminal half \\
$\bar{\mathrm{FI}}_\text{C}$ & Mean FI over C-terminal half \\
$\bar{\mathrm{FI}}_\text{hv}$ & Mean FI over high-variance columns \\
$\bar{\mathrm{FI}}_\text{lv}$ & Mean FI over low-variance columns \\
$\Delta_\text{NC}$ & $\bar{\mathrm{FI}}_\text{N} - \bar{\mathrm{FI}}_\text{C}$ \\
$\Delta_\text{hvlv}$ & $\bar{\mathrm{FI}}_\text{hv} - \bar{\mathrm{FI}}_\text{lv}$ \\
$\sigma^2_\text{N}$ & Within-sequence FI variance over N-terminal half \\
$\sigma^2_\text{C}$ & Within-sequence FI variance over C-terminal half \\
\bottomrule
\end{tabular}
\end{table}

Two FI weightings are used: raw FI values, and residue-normalised FI values in which the per-column mean is subtracted before aggregation. The raw variant is used in all main results; the normalised variant is evaluated as an ablation in \Cref{sec:attribution}.

\subsection{Subsampling strategies}
\label{si:impl:subsampling}

All strategies produce 12 subsets of 32 sequences per arm, with the query sequence prepended to every subset and random seeds fixed at 0 throughout.

\paragraph{AF-Cluster.} The filtered MSA is partitioned by DBSCAN on one-hot encoded sequences over the 21-character alphabet (20 amino acids plus gap), giving feature vectors of dimension $21L$. The $\epsilon$ parameter is selected by scanning $[3.0, 20.0]$ in steps of $0.5$ and applying DBSCAN at each step to a random 25\% subsample, retaining the $\epsilon$ that maximises the number of clusters. Final clustering runs on the full MSA at the selected $\epsilon$ with $\texttt{min\_samples} = 3$, and noise points are discarded.

\paragraph{Arm B (F-Cluster).} Each sequence is embedded by its L2-normalised FI profile with gap positions imputed to 0. $k$-means with $k = 12$ is applied, and within each cluster the 32 sequences nearest the centroid in embedding space are selected.

\paragraph{Arm C (Hybrid-Cluster).} The pairwise clustering distance is
\begin{equation}
    d_{ij} = 0.5\,d_{\text{Hamming}}(i,j) + 0.3\,d_{\cos}^{\mathrm{FI}}(i,j) + 0.2\,d_{\cos}^{\mathrm{entropy}}(i,j),
\end{equation}
where $d_{\text{Hamming}}$ is the normalised pairwise Hamming distance on aligned columns and $d_{\cos}$ is the cosine dissimilarity in the respective feature space. For MSAs with $N > 3{,}000$ sequences, the pairwise Hamming matrix is approximated by concatenating FI and entropy feature vectors weighted by their coefficients and applying $k$-means directly; for smaller MSAs, the distance matrix is embedded into 10-dimensional Euclidean space by classical multidimensional scaling before $k$-means.

\paragraph{Arm D (Outlier-guided).} Each sequence is scored over the top 30 switch columns,
\begin{equation}
    s_i = \sum_{c \in \mathcal{S}} |z_{i,c}| \cdot (1 - \mathrm{entropy}_{i,c}),
\end{equation}
where $z_{i,c}$ is the surprisal at position $c$ and gap entries contribute 0. Sequences are sorted by $s_i$ in descending order and partitioned into 12 bands by round-robin assignment.

\paragraph{Arm E (Diversity-guided).} Twelve subset seeds are chosen by farthest-point sampling in L2-normalised FI space, and each subset comprises the seed and its 31 nearest neighbours in that space. This arm maximises representational diversity across subsets, complementing the compactness of Arm B.

\paragraph{Pattern-SF arms.} Four strategies operate on the region-level feature vectors of \Cref{si:impl:features}:
\begin{itemize}
    \item \textbf{Region-cluster} applies $k$-means ($k = 12$) to the six-dimensional vector $(\bar{\mathrm{FI}}_\text{N},\, \bar{\mathrm{FI}}_\text{C},\, \bar{\mathrm{FI}}_\text{hv},\, \bar{\mathrm{FI}}_\text{lv},\, \sigma^2_\text{N},\, \sigma^2_\text{C})$ after standardisation, then subsamples each cluster to 32 sequences.
    \item \textbf{Contrast} sorts sequences by $\Delta_\text{NC} = \bar{\mathrm{FI}}_\text{N} - \bar{\mathrm{FI}}_\text{C}$ and partitions the sorted list into 12 equal bands, each forming one subset.
    \item \textbf{Mosaic} constructs each subset by strided sampling across the contrast-sorted list, mixing high- and low-$\Delta_\text{NC}$ sequences within each subset to increase N/C frustration variance.
    \item \textbf{Gradient} sorts sequences by $\bar{\mathrm{FI}}_\text{N}$ and partitions into 12 bands, producing a directional sweep from N-frustrated to N-non-frustrated homologs.
\end{itemize}

\subsection{Structure prediction protocol}
\label{si:impl:af2}

All predictions use AlphaFold2 via the ColabFold batch interface (v1.6.1, \texttt{alphafold2\_ptm} model). No structural relaxation is performed, and pLDDT scores are read from the B-factor column of the output PDB. Predictions use custom A3M inputs only, without paired or environmental MSA supplementation.

Prediction follows a two-stage screen-and-refine protocol. In the \textbf{screen stage}, all $n$ subsets for an arm are run with 1 model and 2 seeds, producing 2 structures per subset. In the \textbf{refine stage}, the top $K$ subsets selected by the refinement policy are re-run with 5 models and 4 seeds, producing 20 structures per subset. Both stages use 3 recycling iterations. The number of promoted subsets is
\begin{equation}
    K = \max\!\left(4,\; \min\!\left(16,\; \left\lceil 0.2 \, n_{\text{subsets}} \right\rceil\right)\right).
\end{equation}
The full-MSA baseline uses 5 models and 2 seeds with 3 recycling iterations, producing 10 structures per case.

\subsection{Refinement strategy}
\label{si:impl:refinement}

\paragraph{Global pLDDT selection.} Each subset is represented by its screen-stage structure of highest mean pLDDT, and the top $K$ subsets by this score are promoted to the refine stage.

\paragraph{Coverage-aware refinement for fold-switching cases.} For proteins with known alternative states (KaiB, GA98, GB98), each subset is assigned to the state with the lower switch-region RMSD among its screen structures, and is labelled ambiguous if $|\mathrm{RMSD}_A - \mathrm{RMSD}_B| < 1.0$~\AA. The promotion budget is split equally across non-empty state partitions, with any unfilled quota redistributed to the highest-pLDDT subsets in the remaining pool.

\paragraph{Per-basin refinement for discovery cases.} For Mpt53, screen-stage representative structures are clustered by TM-score using single-linkage at a threshold of 0.80, and within each structural basin the top $\lceil K / n_{\text{basins}} \rceil$ subsets by pLDDT are promoted. This identifies structurally distinct prediction clusters without assuming prior knowledge of alternative states.

\subsection{Protein systems and reference structures}
\label{si:impl:datasets}

\paragraph{Full benchmark composition.} The recovery evaluation comprises 48 evaluable cases spanning four conformational classes: 11 fold-switching (metamorphic) systems, 12 allosteric ligand-induced systems, 10 oligomerization- or domain-swap-coupled systems, and 15 intrinsically disordered regions that undergo disorder-to-order transitions. These 48 cases are the evaluable subset of a larger initial assembly of 60 candidate systems (15 per class); systems lacking a usable two-state or ordered reference were excluded from the recovery evaluation, including two fold-switching entries whose paired references were later found to correspond to different proteins. The 33 two-state systems (fold-switching, allosteric and oligomerization-coupled) each have two experimentally determined reference structures, whereas the 15 disordered-region cases have a single ordered reference and are scored as single-state fold recovery. The dominant/default basin is assigned as the conformation an unguided full-MSA prediction adopts and the target basin as the alternative to be recovered. The full list of evaluable cases, classes and reference structures is given in \Cref{tab:si:full_benchmark}; the six systems detailed in the remainder of this subsection (KaiB, GA98, GB98, Mpt53, RfaH and MAD2) are the development and external-validation set and receive expanded treatment.

\begin{longtable}{llllc}
\caption{The 48 evaluable case benchmark: conformational class and reference structures (11 fold-switching, 12 allosteric, 10 oligomerization-coupled, 15 IDR). State A and State B list the two reference structures (PDB identifiers) for each two-state system; the dominant-versus-target basin assignment for the development systems is given in \Cref{tab:si:proteins}. For the engineered GA/GB designed pair (GA98GB98), State A is the dominant GB fold (2LHD) and State B the target GA fold (2LHC), consistent with \Cref{tab:si:proteins} and the recovery scoring. Disordered-region (IDR) cases carry a single ordered reference and are scored as single-state recovery; IL5 ($^{\dagger}$) is a borderline (weak) oligomerization-coupled case whose two reference entries resolve to the same PDB. Lengths are full experimental construct lengths for the systems also detailed in \Cref{tab:si:proteins} (KaiB, GA98/GB98, Mpt53, RfaH, MAD2); for the remaining systems, for which a separate full-construct length is not defined on disk, the modelled query-region length is given. These 48 evaluable cases are the subset of a 60-system initial assembly used for recovery; per-subset MSA-property analyses (Fig.~\ref{fig:depth}d--i) are computed over the full 60-system assembly.}\label{tab:si:full_benchmark}\\
\toprule
Class & Protein & State A & State B & Length \\
\midrule
\endfirsthead
\multicolumn{5}{l}{\footnotesize\itshape Table \ref{tab:si:full_benchmark} continued}\\
\toprule Class & Protein & State A & State B & Length \\ \midrule \endhead
\midrule \multicolumn{5}{r}{\footnotesize continued on next page}\\ \endfoot
\bottomrule \endlastfoot
Fold-switch & A1AT & 1QLP & 1OPH & 372 \\
Fold-switch & CLIC1 & 1K0M & 1K0N & 235 \\
Fold-switch & GA98GB98 & 2LHD & 2LHC & 56 \\
Fold-switch & IscU & 1WFZ & 2KQK & 130 \\
Fold-switch & KAIA & 1R8J & 1R8Q & 272 \\
Fold-switch & KAIB & 2QKE & 5JYT & 91 \\
Fold-switch & MAD2 & 1DUJ & 1GO4 & 205 \\
Fold-switch & P13 & 2I4Z & 4M1E & 270 \\
Fold-switch & RFAH & 5OND & 6C6S & 162 \\
Fold-switch & SELECASE & 4QHF & 4QHG & 109 \\
Fold-switch & XCL1 & 1J8I & 2JP1 & 93 \\
Allosteric & AKE & 4AKE & 1AKE & 214 \\
Allosteric & CALM & 1CFD & 1CLL & 148 \\
Allosteric & CDK6 & 1BI7 & 2EUF & 269 \\
Allosteric & DHFR & 1RX1 & 1RX4 & 159 \\
Allosteric & GLNBP & 1GGG & 1WDN & 220 \\
Allosteric & LIVBP & 2LIV & 1Z15 & 344 \\
Allosteric & MBP & 1OMP & 1ANF & 370 \\
Allosteric & MEK1 & 3EQI & 3MBL & 315 \\
Allosteric & PHOB & 1B00 & 1ZES & 122 \\
Allosteric & RAS & 4Q21 & 5P21 & 169 \\
Allosteric & RBP & 1URP & 2DRI & 271 \\
Allosteric & SRC & 2SRC & 1Y57 & 449 \\
Oligomer & BARNASE & 1A2P & 1YVS & 108 \\
Oligomer & BCLXL & 1LXL & 1R2D & 221 \\
Oligomer & CALBD9K & 4ICB & 1HT9 & 76 \\
Oligomer & CCL2 & 1DOK & 1DOM & 72 \\
Oligomer & CYSTC & 3GAX & 1TIJ & 107 \\
Oligomer & DTOX & 1F0L & 1DDT & 520 \\
Oligomer & ENGRAILED & 1ENH & 2JWT & 54 \\
Oligomer & IL5$^{\dagger}$ & 1HUL & 1HUL & 108 \\
Oligomer & RNASA & 7RSA & 1A2W & 124 \\
Oligomer & SUC1 & 1PUC & 1SCE & 101 \\
IDR & ASYN & 1XQ8 & -- & 140 \\
IDR & BAD & 1G5J & -- & 25 \\
IDR & BID & 1DDB & -- & 195 \\
IDR & CJUN & 1JNM & -- & 57 \\
IDR & CREB & 1KDX & -- & 27 \\
IDR & EIF4E & 1WKW & -- & 20 \\
IDR & HIF1A & 1L8C & -- & 51 \\
IDR & HISTH1 & 1GHC & -- & 75 \\
IDR & HMGB1 & 1AAB & -- & 83 \\
IDR & NCOA1 & 1KBH & -- & 59 \\
IDR & NUMB & 1WJ1 & -- & 156 \\
IDR & P27 & 1JSU & -- & 69 \\
IDR & P53TAD & 2K8F & -- & 90 \\
IDR & PKID & 1KDX & -- & 27 \\
IDR & PUMA & 2M04 & -- & 25 \\
\end{longtable}

The systems below receive detailed treatment as development and external-validation cases. KaiB, GA98, GB98 and Mpt53 constitute the development set; RfaH and MAD2 are held out for external validation. GB98 T25I/L20A is a designed negative control derived from the GB98 sequence and is not used for method development.

\begin{table}[H]
\caption{Reference structures for the development and validation systems. State A denotes the dominant or native conformation; State B denotes the fold-switched or target alternative. Construct lengths are the residue ranges used for all predictions and evaluations. For the engineered GA98/GB98 pair the GB fold (2LHD) is the basin an unguided full-MSA prediction adopts (State A) and the GA fold (2LHC) is the targeted alternative (State B) for both proteins, consistent with the recovery scoring and the S2 pipeline. Among these systems, KaiB, the GA98/GB98 pair, RfaH and MAD2 are part of the 48-case recovery benchmark (\Cref{tab:si:full_benchmark}); Mpt53 is a single-basin discovery case and GB98 T25I/L20A a designed negative control, both assessed separately (\Cref{fig:si_boundaries}) and not counted among the 48 evaluable cases.}
\label{tab:si:proteins}
\centering
\small
\setlength{\tabcolsep}{4pt}
\begin{tabular}{lllll}
\toprule
Protein & Length & State A & State B & UniProt \\
\midrule
KaiB           & 91 aa  & 2QKE (chain B) & 5JYT (chain A) & Q79V61 \\
GA98           & 56 aa  & 2LHD (chain A) & 2LHC (chain A) & - \\
GB98           & 56 aa  & 2LHD (chain A) & 2LHC (chain A) & - \\
GB98 T25I/L20A & 56 aa  & 2LHE (chain A) & -            & - \\
Mpt53          & 136 aa & 1LU4 (chain A) & discovery task & P9WG65 \\
RfaH           & 162 aa & 5OND (chain A) & 6C6S (chain D) & P0AFW0 \\
MAD2           & 205 aa & 1DUJ (chain A) & 1GO4 (chain A) & Q13257 \\
\bottomrule
\end{tabular}
\end{table}

\paragraph{KaiB.} The construct spans residues 5--95 of PDB 2QKE chain B, trimmed from the 108-residue crystal chain to remove disordered terminal residues. State B (5JYT) carries the stabilising mutations Y8A, N29A, G89A, D91R and Y94A relative to wild-type and was determined by NMR as an ensemble of 20 models; the first model is used as the reference.

\paragraph{GA and GB proteins.} The GA and GB families are combinatorially engineered 56-residue proteins that adopt distinct $\alpha$-helical (GA) and $\beta$-sheet (GB) folds. The primary fold-switch pair evaluated is GB98 versus GA98. For both queries the gap-filtered alignment is dominated by GB-fold ($\beta$) homologs, so an unguided full-MSA prediction defaults to the GB fold; we therefore take the GB fold (2LHD) as the dominant/default basin (State A) and the GA fold (2LHC) as the targeted alternative basin (State B) for both proteins, and recovery is scored against the GA fold (2LHC) for both (\Cref{tab:si:proteins}). Structures were retrieved from the RCSB and trimmed to residues 1--56, with no further trimming required.

\paragraph{Mpt53.} The construct spans residues 38--173 in UniProt coordinates (P9WG65), corresponding to crystal residues 1001--1134 of PDB 1LU4 chain A with offset 963 applied. No experimentally confirmed alternative fold is known for Mpt53, so it is treated as a discovery benchmark whose objective is to identify non-native basins rather than recover a known target.

\paragraph{RfaH.} Full-length RfaH (residues 1--162) is used. State A is the NusG-like autoinhibited conformation (5OND) and State B is the fold-switched C-terminal-domain conformation observed in the RNAP elongation complex (6C6S). The fold-switch region is the C-terminal domain, approximately residues 101--162.

\paragraph{MAD2.} The full 205-residue construct is used. State A is the open O-MAD2 conformation (1DUJ) and State B is the closed C-MAD2 conformation in complex with MAD1 (1GO4). The switch involves topological rearrangement of the safety-belt region, approximately residues 170--205.

\paragraph{GB98 T25I/L20A.} This double mutant of GB98 (2LHE) is used exclusively as a negative control. The two substitutions occur at positions implicated in the fold-switch interface and together ablate the fold-switch signal detectable in the MSA. It is not used for method development or tuning.

\subsection{MSA construction}
\label{si:impl:msa}

Multiple sequence alignments were generated with ColabFold v1.6.1 using the \texttt{--msa-only} flag. The MMseqs2 search ran against the combined UniRef30 and environmental databases via the public ColabFold API server, with all queries submitted as single FASTA sequences without template search; alignments were retrieved in A3M format.

Raw A3M files were filtered following \citet{wayment2024predicting}: a non-query sequence is retained if and only if its fraction of gap characters over aligned columns is at most 25\%, with lowercase A3M insertion characters excluded from the gap count. The query sequence is always retained.

\begin{table}[H]
\caption{MSA statistics after gap filtering for the development and validation systems. Neff is computed at a sequence identity threshold of 0.8 using the standard reweighting formula; it is not reported for the validation-only cases RfaH and MAD2.}
\label{tab:si:msa}
\centering
\small
\setlength{\tabcolsep}{4pt}
\begin{tabular}{llcccc}
\toprule
Protein & Variant & Raw depth & Filtered & Length & Neff \\
\midrule
KaiB  & KaiB            & 7,895 & 6,821 & 91  & 4,570 \\
GA/GB & GA98            &   438 &   179 & 56  & 22.1  \\
      & GB98            &   396 &   181 & 56  & 26.1  \\
      & GB98 T25I/L20A  &   369 &   169 & 56  & 26.6  \\
Mpt53 & Mpt53           & 8,418 & 5,647 & 136 & 4,801 \\
\midrule
RfaH  & RfaH            & 5,606 & 3,359 & 162 & -   \\
MAD2  & MAD2            & 3,179 & 2,091 & 205 & -   \\
\bottomrule
\end{tabular}
\end{table}

\subsection{Evaluation metrics}
\label{si:impl:eval}

\paragraph{Reference pairing and evaluated residues.} For every two-state protein, we define a dominant reference state $D$ and a target reference state $T$ before evaluating any predictions. The dominant state is the state adopted by the unguided full-MSA prediction, and the target state is the alternative state to be recovered. For each prediction $X$, structural comparisons are computed separately against $D$ and $T$. The evaluated residue set for each comparison is the set of residues that can be mapped between the prediction and the corresponding reference after applying the benchmark residue mapping. Cases with unresolved chain or residue mapping are excluded from recovery statistics and listed as non-evaluable.

\paragraph{Primary AF-Cluster-style dual-reference endpoint.} The primary two-state recovery metric follows the dual-reference logic used by AF-Cluster: a prediction must be structurally close to the target reference and closer to the target than to the dominant reference. Let
\[
R_T(X) = \mathrm{RMSD}(X,T), \qquad R_D(X) = \mathrm{RMSD}(X,D),
\]
where RMSD is the C$\alpha$ RMSD after optimal superposition to the corresponding reference. A prediction is counted as a target-state hit if and only if
\[
R_T(X) \leq 3.0~\text{\AA}, \qquad R_T(X) < R_D(X), \qquad \overline{\mathrm{pLDDT}}(X) \geq 70.
\]
The same criterion is applied to all two-state methods and controls. This endpoint is mutually exclusive: a prediction cannot simultaneously be counted as both dominant-state and target-state recovery.

\paragraph{State assignment.} Predictions are assigned to the target state, dominant state, ambiguous state, or low-confidence state using the same dual-reference quantities. Predictions satisfying the target-hit criterion above are labelled target-state hits. Predictions with $R_D(X) \leq 3.0~\text{\AA}$, $R_D(X) < R_T(X)$, and $\overline{\mathrm{pLDDT}}(X) \geq 70$ are labelled dominant-state predictions. Predictions with mean pLDDT below 70 are labelled low-confidence. Predictions that are not within 3.0~\AA{} of either state, or that are within the threshold but do not unambiguously favour one reference over the other, are labelled ambiguous and are not counted as target-state hits.

\paragraph{Sensitivity analyses.} We report three sensitivity analyses. First, we compute a margin-based dual-reference endpoint requiring
\[
R_T(X) \leq 3.0~\text{\AA}, \qquad R_T(X)+1.0~\text{\AA} \leq R_D(X), \qquad \overline{\mathrm{pLDDT}}(X) \geq 70.
\]
Second, for fold-switching proteins with predefined switch regions, we compute switch-region RMSD to both references and ask whether the switch region is closer to the target than to the dominant state. This analysis is used as a state-discriminating sensitivity check rather than as the primary hit definition. Third, we compute TM-score to both references as a scale-normalised assignment check. Common-core and loose RMSD criteria are reported only as fold-agnostic structural-proximity upper bounds and are not used as the primary recovery metric.

\paragraph{Single-state IDR evaluation.} Intrinsically disordered-region cases have only one experimentally defined ordered reference and are not two-state recovery tasks. For these cases, a prediction is counted as an ordered-form hit when the C$\alpha$ RMSD to the ordered reference is at most 3.0~\AA{} and the mean pLDDT is at least 70. IDR results are therefore reported separately from two-state conformational recovery.

\paragraph{Non-evaluable assembly-coupled cases.} For oligomerization- or domain-swap-coupled systems, recovery is evaluated only when the available prediction contains the structural unit required to distinguish the two states. Cases in which the experimentally defined target state depends on an assembly that is absent from the prediction are marked as non-evaluable for assembly-state recovery and are not interpreted as successful or failed target-state recovery.

\subsection{Hyperparameters}
\label{si:impl:hparams}

All hyperparameters are fixed before any AlphaFold2 inference. The table below lists every parameter that materially affects results.

\begin{table}[H]
\caption{Hyperparameter settings. All values are fixed prior to inference and held constant across proteins unless noted.}
\label{tab:si:hparams}
\centering
\resizebox{0.8\columnwidth}{!}{
\begin{tabular}{llc}
\toprule
Component & Parameter & Value \\
\midrule
MSA filtering      & Gap fraction threshold          & 25\% \\
\midrule
AF-Cluster         & Alphabet size                   & 21 \\
                   & \texttt{min\_samples}           & 3 \\
                   & $\epsilon$ scan range           & $[3.0, 20.0]$, step 0.5 \\
                   & $\epsilon$ scan subsample       & 25\% \\
\midrule
Subsampling        & Subsets per arm                 & 12 \\
                   & Sequences per subset            & 32 \\
                   & $k$-means clusters              & 12 \\
                   & Switch columns retained (Arm D) & 30 \\
                   & Arm C weights $(\alpha,\beta,\gamma)$ & $(0.5, 0.3, 0.2)$ \\
                   & Arm C MDS dimensionality        & 10 \\
                   & Large-MSA threshold (Arm C)     & $N > 3{,}000$ \\
                   & High-variance percentile        & 80th \\
\midrule
Controls           & Neff$_{80}$ match tolerance     & $\pm10\%$ \\
                   & Max random draws (Neff-match)   & 50 \\
                   & Coev MI tolerance               & $\pm10\%$ \\
                   & Coev draw / MI-eval budget      & 3{,}000 / 200 \\
\midrule
AlphaFold2         & Model type                      & \texttt{alphafold2\_ptm} \\
                   & Screen: models $\times$ seeds   & $1 \times 2$ \\
                   & Refine: models $\times$ seeds   & $5 \times 4$ \\
                   & Recycling iterations            & 3 \\
                   & Random seed                     & 0 \\
\midrule
Refinement         & $K$ (refine budget)             & $\max(4, \min(16, \lceil 0.2n \rceil))$ \\
                   & Ambiguity threshold             & 1.0 \AA \\
                   & Basin TM-score linkage          & 0.80 \\
\midrule
Evaluation         & Dual-reference RMSD threshold (primary) & 3.0 \AA \\
                   & Target-state assignment              & $R_T < R_D$ \\
                   & Hit pLDDT threshold (overall)        & 70 \\
                   & Hit pLDDT threshold (switch region)  & 70 \\
                   & Collapse-score TM threshold          & 0.80 \\
                   & Common-core proximity threshold (sensitivity) & 3.0 \AA \\
                   & Switch-region displacement threshold (sensitivity) & 3.0 \AA \\
\bottomrule
\end{tabular}
}
\end{table}

\subsection{Computational resources}
\label{si:impl:compute}

All experiments ran on a single server with 8 $\times$ NVIDIA A100-SXM4-80GB GPUs, with FrustrAI-Seq inference and ColabFold predictions each allocated an exclusive GPU per job.

FrustrAI-Seq inference times per case were: KaiB (6,816 sequences, 91 aa) 24 min; Mpt53 (5,638 sequences, 136 aa) 20 min; RfaH (3,350 sequences, 162 aa) 2 min; MAD2 (2,066 sequences, 205 aa) 1 min; and the GA/GB variants (167--177 sequences, 56 aa) 20--44 s each.

For AlphaFold2 inference, the screen and refine stages required approximately 5 and 50 minutes per arm for short proteins (56 aa) and approximately 10 and 80 minutes per arm for longer proteins (91--136 aa). Across all experiments approximately 17,000 AlphaFold2 model evaluations were performed. Peak GPU memory per job was 12--25 GB for proteins up to 162 residues. AF-Cluster on RfaH exceeded available VRAM owing to the large number of DBSCAN clusters (116) and was aborted; this failure is noted here rather than in a separate table.

\section{Theoretical Analysis}
\label{si:theory}

\subsection{Setup}

Let $\mathcal{A}$ be a full MSA and $\mathcal{A}'\subseteq\mathcal{A}$ a selected sub-MSA. A structure predictor induces a conditional distribution $P(\mathbf{X}\mid\mathcal{A}')$ over structure space, partitioned into conformational basins $\mathcal{B}=\{B_1,\dots,B_S\}$ with basin probabilities
\[
    P(B_s\mid\mathcal{A}')=\int_{B_s}P(\mathbf{X}\mid\mathcal{A}')\,d\mathbf{X}.
\]
We assume the predictor depends on the MSA through a representation $\phi(\mathcal{A}')$, and that basin probabilities take the softmax form
\[
    P(B_s\mid\mathcal{A}')= \frac{\exp(g_s(\phi(\mathcal{A}')))}{\sum_{r=1}^{S}\exp(g_r(\phi(\mathcal{A}')))},
\]
where $g_s(\cdot)$ is an unknown basin-specific compatibility score.

\subsection{Theorem 1: MSA subsampling induces conformational reweighting}

\begin{theorem}
\label{thm:reweighting}
Suppose there exist two sub-MSAs $\mathcal{A}'_1,\mathcal{A}'_2\subseteq\mathcal{A}$ with $\phi(\mathcal{A}'_1)\neq\phi(\mathcal{A}'_2)$, and that for some basin $B_s$ and at least one $r\neq s$,
\[
    \bigl[g_s(\phi(\mathcal{A}'_1))-g_r(\phi(\mathcal{A}'_1))\bigr] \;\neq\; \bigl[g_s(\phi(\mathcal{A}'_2))-g_r(\phi(\mathcal{A}'_2))\bigr].
\]
Then $P(B_s\mid\mathcal{A}'_1)\neq P(B_s\mid\mathcal{A}'_2)$.
\end{theorem}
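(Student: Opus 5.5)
The plan is to rewrite the softmax basin probability so that it depends on $\mathcal{A}'$ only through the log-odds $\Delta_{sr}(\mathcal{A}') := g_s(\phi(\mathcal{A}')) - g_r(\phi(\mathcal{A}'))$, and then use strict monotonicity of the logistic function. Dividing numerator and denominator by $\exp(g_s)$ gives
\[
P(B_s\mid\mathcal{A}') = \Bigl(1 + \textstyle\sum_{r\neq s} \exp\bigl(-\Delta_{sr}(\mathcal{A}')\bigr)\Bigr)^{-1}.
\]
So $P(B_s\mid\mathcal{A}')$ is a strictly decreasing function of the single quantity $Z_s(\mathcal{A}') := \sum_{r\neq s}\exp(-\Delta_{sr}(\mathcal{A}'))$. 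The map $z\mapsto (1+z)^{-1}$ is injective on $(0,\infty)$. Hence $P(B_s\mid\mathcal{A}'_1)\neq P(B_s\mid\mathcal{A}'_2)$ holds exactly when $Z_s(\mathcal{A}'_1)\neq Z_s(\mathcal{A}'_2)$.

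\textbf{Two-state case.} In the setting of the Methods ($S=2$), the sum has a single term. Then $P(B_s\mid\mathcal{A}') = \sigma(\Delta_{sr}(\mathcal{A}'))$ with $\sigma$ the logistic sigmoid. Since $\sigma$ and $\exp$ are strictly monotone, the hypothesis $\Delta_{sr}(\mathcal{A}'_1)\neq\Delta_{sr}(\mathcal{A}'_2)$ immediately gives the conclusion. The hypothesis $\phi(\mathcal{A}'_1)\neq\phi(\mathcal{A}'_2)$ plays no role in the argument; it is only needed for the log-odds hypothesis to be satisfiable at all. This is the case I would write out in full.

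\textbf{General $S$: the main obstacle.} For $S>2$ the statement as literally phrased does not follow. A change in one log-odds $\Delta_{sr}$ can be exactly offset by changes in the other $\Delta_{sr'}$, leaving $Z_s$ unchanged. For example, with $S=3$ one can move mass from $B_r$ to $B_{r'}$ while keeping $P(B_s)$ fixed. I would therefore check this counterexample explicitly and then prove the theorem under a strengthened hypothesis. The first option is to require that the log-odds change in a common direction for every $r\neq s$, with at least one strict change, i.e.\ $\Delta_{sr}(\mathcal{A}'_1)\ge\Delta_{sr}(\mathcal{A}'_2)$ for all $r\neq s$ with at least one inequality strict. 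Then $Z_s$ changes strictly by termwise monotonicity of $\exp$. Equivalently, one may assume directly that the aggregated compatibility $g_s - \log\sum_{r\neq s} e^{g_r}$ differs between the two sub-MSAs.

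Either way, the proof itself is a short monotonicity argument. The only real work is stating the correct hypothesis for $S>2$, or restricting the theorem to the two-state regime that the paper actually uses.
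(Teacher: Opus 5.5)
Your analysis is correct, and it is more careful than the paper's own proof.

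\textbf{Where the paper's proof breaks.} The paper starts from the same observation you do, namely $\log\bigl(P(B_s\mid\mathcal{A}')/P(B_r\mid\mathcal{A}')\bigr) = g_s(\phi(\mathcal{A}')) - g_r(\phi(\mathcal{A}'))$. It then asserts that a change in these relative odds ``implies'' $P(B_s\mid\mathcal{A}'_1)\neq P(B_s\mid\mathcal{A}'_2)$. That inference is exactly your two-state argument. It is valid when $S=2$: then $P(B_r)=1-P(B_s)$, so the odds are a strictly increasing function of $P(B_s)$. However, the SI states the setup and Theorem~\ref{thm:focusing} for general $S$, and for $S\ge 3$ the step fails for the reason you give: the odds can change through $P(B_r)$ alone.

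\textbf{Explicit counterexample.} Take $S=3$ with scores $(g_s,g_r,g_{r'})=(0,0,0)$ under $\mathcal{A}'_1$ and $(0,\log\tfrac32,\log\tfrac12)$ under $\mathcal{A}'_2$. The basin probabilities are $(\tfrac13,\tfrac13,\tfrac13)$ and $(\tfrac13,\tfrac12,\tfrac16)$. So $g_s-g_r$ changes from $0$ to $-\log\tfrac32$, which also forces $\phi(\mathcal{A}'_1)\neq\phi(\mathcal{A}'_2)$, yet $P(B_s)$ is unchanged.

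\textbf{Your proposed fixes.} Writing out the $S=2$ case in full is the right move; it matches the two-state setting declared in the Methods problem setup. Strengthening the hypothesis for general $S$ is also right. The paper's general-$S$ version buys nothing beyond yours, since it is not true as stated.

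\textbf{One wording point.} Your two proposed strengthenings are not equivalent.
\begin{itemize}
\item Requiring every $\Delta_{sr}$ to move in the same direction, with at least one strict change, is a genuine sufficient condition.
\item Requiring $g_s-\log\sum_{r\neq s}e^{g_r}$ to differ is equivalent to $Z_s(\mathcal{A}'_1)\neq Z_s(\mathcal{A}'_2)$, and hence to the conclusion itself. Adopting it as a hypothesis turns the theorem into a tautology rather than a strengthened result.
\end{itemize}
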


\begin{proof}
The log-odds between basins $s$ and $r$ under any sub-MSA $\mathcal{A}'$ is
\[
    \log\frac{P(B_s\mid\mathcal{A}')}{P(B_r\mid\mathcal{A}')} = g_s(\phi(\mathcal{A}')) - g_r(\phi(\mathcal{A}')).
\]
By assumption this quantity differs between $\mathcal{A}'_1$ and $\mathcal{A}'_2$ for at least one $r\neq s$, so the relative odds of $B_s$ versus $B_r$ change between the two sub-MSAs, which implies $P(B_s\mid\mathcal{A}'_1)\neq P(B_s\mid\mathcal{A}'_2)$.
\end{proof}

\noindent Theorem~\ref{thm:reweighting} establishes that MSA subsampling is a conformational reweighting operation whose effect is mediated entirely by the shift in representation $\phi(\mathcal{A}')$, making the choice of $\phi$ the binding constraint on what subsampling can achieve.

\subsection{Theorem 2: Conditions for focusing and impossibility}

\begin{theorem}
\label{thm:focusing}
\textbf{(a) Targeted focusing.} Suppose there exist a target basin $B_{s^*}$ and a sub-MSA $\mathcal{A}'$ such that
\[
    g_{s^*}(\phi(\mathcal{A}')) - g_s(\phi(\mathcal{A}')) \;\geq\; \Delta \quad \forall\, s\neq s^*.
\]
Then $P(B_{s^*}\mid\mathcal{A}') \;\geq\; \frac{1}{1+(S-1)e^{-\Delta}}$, and $P(B_{s^*}\mid\mathcal{A}')\to 1$ as $\Delta\to\infty$.

\medskip\noindent\textbf{(b) Single-basin impossibility.} Suppose there exists a dominant basin $B_{s_0}$ such that for every $\mathcal{A}'\subseteq\mathcal{A}$,
\[
    g_{s_0}(\phi(\mathcal{A}')) - g_s(\phi(\mathcal{A}')) \;\geq\; \Delta \quad \forall\, s\neq s_0.
\]
Then $P(B_{s_0}\mid\mathcal{A}') \;\geq\; \frac{1}{1+(S-1)e^{-\Delta}}$ for every sub-MSA $\mathcal{A}'$, and no row-subsetting strategy can recover an alternative basin with high probability when $\Delta$ is large.
\end{theorem}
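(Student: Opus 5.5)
Both parts follow directly from the softmax form of the basin probabilities assumed in the Setup, so the proof will be a short direct computation with no need for Theorem~\ref{thm:reweighting}.

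For part (a), fix the sub-MSA $\mathcal{A}'$ and write $g_s$ for $g_s(\phi(\mathcal{A}'))$. Dividing numerator and denominator of the softmax by $\exp(g_{s^*})$ gives
\[
P(B_{s^*}\mid\mathcal{A}') = \frac{1}{1+\sum_{s\neq s^*}\exp\bigl(-(g_{s^*}-g_s)\bigr)}.
\]
By hypothesis each exponent satisfies $-(g_{s^*}-g_s)\le -\Delta$. Since $\exp$ is monotone, each of the $S-1$ summands is at most $e^{-\Delta}$. The denominator is therefore at most $1+(S-1)e^{-\Delta}$, and because $x\mapsto 1/x$ is decreasing on positive reals, the stated lower bound follows. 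For the limit, the lower bound tends to $1$ as $\Delta\to\infty$ with $S$ fixed, and $P\le 1$ always, so $P(B_{s^*}\mid\mathcal{A}')\to1$ by squeezing. Strictly, this limit statement should be read as concerning a family of sub-MSAs (or score configurations) indexed by $\Delta$, and I will phrase it that way.

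For part (b), apply the computation of part (a) verbatim with $s^*$ replaced by $s_0$, separately for each $\mathcal{A}'\subseteq\mathcal{A}$; the hypothesis holds uniformly over all such $\mathcal{A}'$. This yields $P(B_{s_0}\mid\mathcal{A}')\ge 1/(1+(S-1)e^{-\Delta})$ for every sub-MSA. For the impossibility clause, take any $s\neq s_0$ and bound
\[
P(B_s\mid\mathcal{A}')\le 1-P(B_{s_0}\mid\mathcal{A}')\le \frac{(S-1)e^{-\Delta}}{1+(S-1)e^{-\Delta}}\le (S-1)e^{-\Delta}.
\]
This bound holds uniformly over all sub-MSAs, and a row-subsetting strategy only ever outputs some $\mathcal{A}'\subseteq\mathcal{A}$. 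It follows that every alternative basin has probability at most $(S-1)e^{-\Delta}$, which tends to $0$ as $\Delta$ grows. The same bound holds after averaging over any (possibly randomized) family of selected subsets, since an average of quantities each at most $(S-1)e^{-\Delta}$ is itself at most that.

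There is no real obstacle here. The only point needing care is making \emph{with high probability} precise, which I will do via this explicit uniform upper bound on $P(B_s\mid\mathcal{A}')$.
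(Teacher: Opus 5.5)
Your proof is correct and follows the paper's own argument: the paper likewise bounds $\sum_{s\neq s^*}e^{z_s}\le (S-1)e^{z_{s^*}}e^{-\Delta}$ in the softmax denominator, and for part (b) applies the same bound uniformly to every $\mathcal{A}'\subseteq\mathcal{A}$. Your explicit bound $P(B_s\mid\mathcal{A}')\le (S-1)e^{-\Delta}$ for $s\neq s_0$, and your remark that the $\Delta\to\infty$ limit must be read over a $\Delta$-indexed family, make precise two points the paper leaves informal.
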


\begin{proof}[Proof of Part (a)]
Let $z_s = g_s(\phi(\mathcal{A}'))$. By assumption $z_s \leq z_{s^*} - \Delta$ for all $s\neq s^*$, so $\sum_{s\neq s^*} e^{z_s} \leq (S-1)e^{z_{s^*}}e^{-\Delta}$. Substituting into the softmax denominator,
\[
    P(B_{s^*}\mid\mathcal{A}') = \frac{e^{z_{s^*}}}{e^{z_{s^*}}+\sum_{s\neq s^*}e^{z_s}} \;\geq\; \frac{e^{z_{s^*}}}{e^{z_{s^*}}+(S-1)e^{z_{s^*}}e^{-\Delta}} = \frac{1}{1+(S-1)e^{-\Delta}}.
\]
As $\Delta\to\infty$, $e^{-\Delta}\to 0$, giving $P(B_{s^*}\mid\mathcal{A}')\to 1$.
\end{proof}

\begin{proof}[Proof of Part (b)]
The argument is identical to Part (a) with $s^*$ replaced by $s_0$, except that the margin condition holds uniformly over all $\mathcal{A}'\subseteq\mathcal{A}$. The same bound therefore gives $P(B_{s_0}\mid\mathcal{A}') \geq \frac{1}{1+(S-1)e^{-\Delta}}$ for every feasible subset, so no subsampling strategy can meaningfully shift probability mass away from $B_{s_0}$.
\end{proof}

\paragraph{Implication.} Part (b) formalizes the empirical observation that SF-Cluster fails to recover an alternative conformation for Mpt53: when the MSA pool is structurally single-basin, every feasible $\phi(\mathcal{A}')$ remains dominated by $g_{s_0}$, and subsampling cannot create conformations absent from the input pool.

\section{Supplement Figures}
\setcounter{figure}{0}
\renewcommand{\figurename}{Extended Data Fig.}

\begin{figure}[h]
\centering
\includegraphics[width=\linewidth]{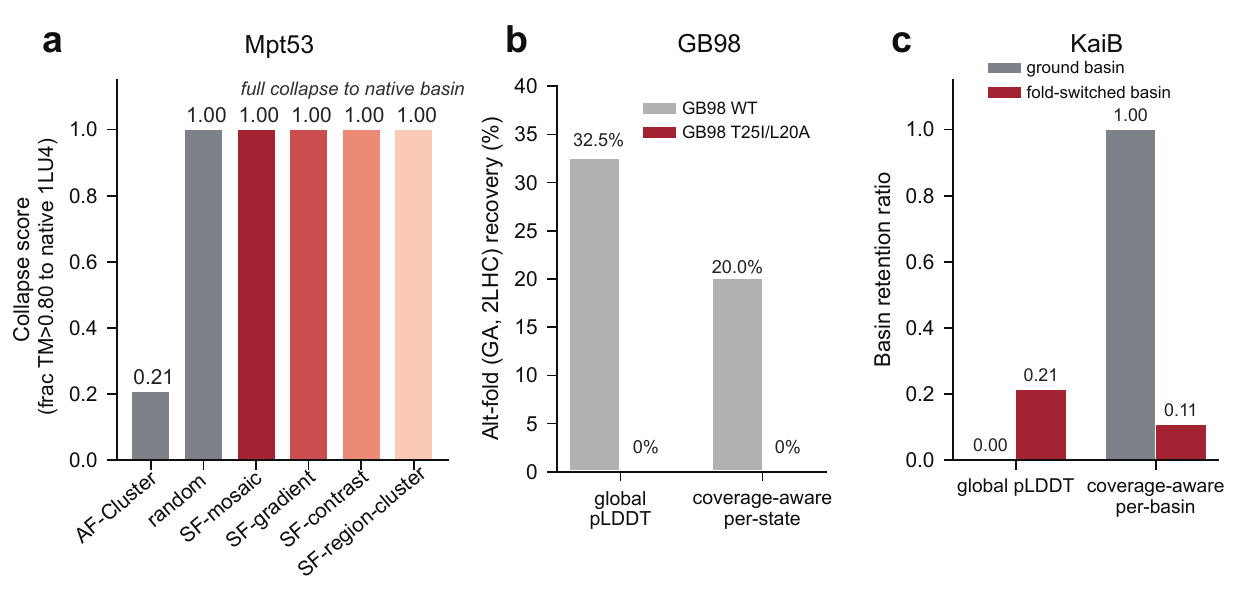}
\caption{\textbf{Boundaries of frustration-guided subsampling.} The framework defines its own limits across three regimes. \textbf{(a)} Mpt53 discovery case: no subsampling method escapes the single native basin. Collapse score is the fraction of predictions within TM-score $0.80$ of the native fold (1LU4); the SF and random arms sit at full collapse ($1.00$) and AF-Cluster, despite a lower collapse ($0.21$), still recovers no alternative basin, consistent with the single-basin impossibility of Theorem~\ref{thm:focusing}(b). \textbf{(b)} GB98 designed pair: shared-core proximity is not global fold recovery. Under the common-core structural-proximity criterion (used for the fold-switch ablations, not the AF-Cluster-style dual-reference endpoint of the main text), $32.5\%$ of wild-type GB98 refine-stage predictions under global-pLDDT refinement and $20.0\%$ under coverage-aware per-state refinement fall within $3$~\AA{} of the GA reference (2LHC) on the shared structural core. Under the full-chain dual-reference endpoint, however, recovery of the GA fold is $0\%$ under both refinement policies: these predictions remain in the dominant GB fold (minimum full-chain C$\alpha$ RMSD $1.49$~\AA{} to 2LHD) and never globally adopt the GA fold (minimum full-chain RMSD $3.7$~\AA, beyond the $3$~\AA{} threshold). The common-core hits are therefore satisfied by the $\beta$-core shared between the two folds while the global fold is unchanged, illustrating why the dual-reference endpoint is used as the primary criterion. The designed T25I/L20A double mutant, which destroys the fold-switch signal in the alignment, reaches $0\%$ under the common-core criterion as well, removing even the shared-core proximity. \textbf{(c)} KaiB refinement: a basin-coverage trade-off. Global-pLDDT ranking retains the well-sampled fold-switched basin but drops the under-sampled ground basin to zero retention ($0/6$), whereas the coverage-aware per-basin quota rescues the ground basin ($6/6$) at a modest cost to fold-switched retention ($16/75 \to 8/75$).}
\label{fig:si_boundaries}
\end{figure}

\begin{figure}[h]
\centering
\includegraphics[width=0.7\linewidth]{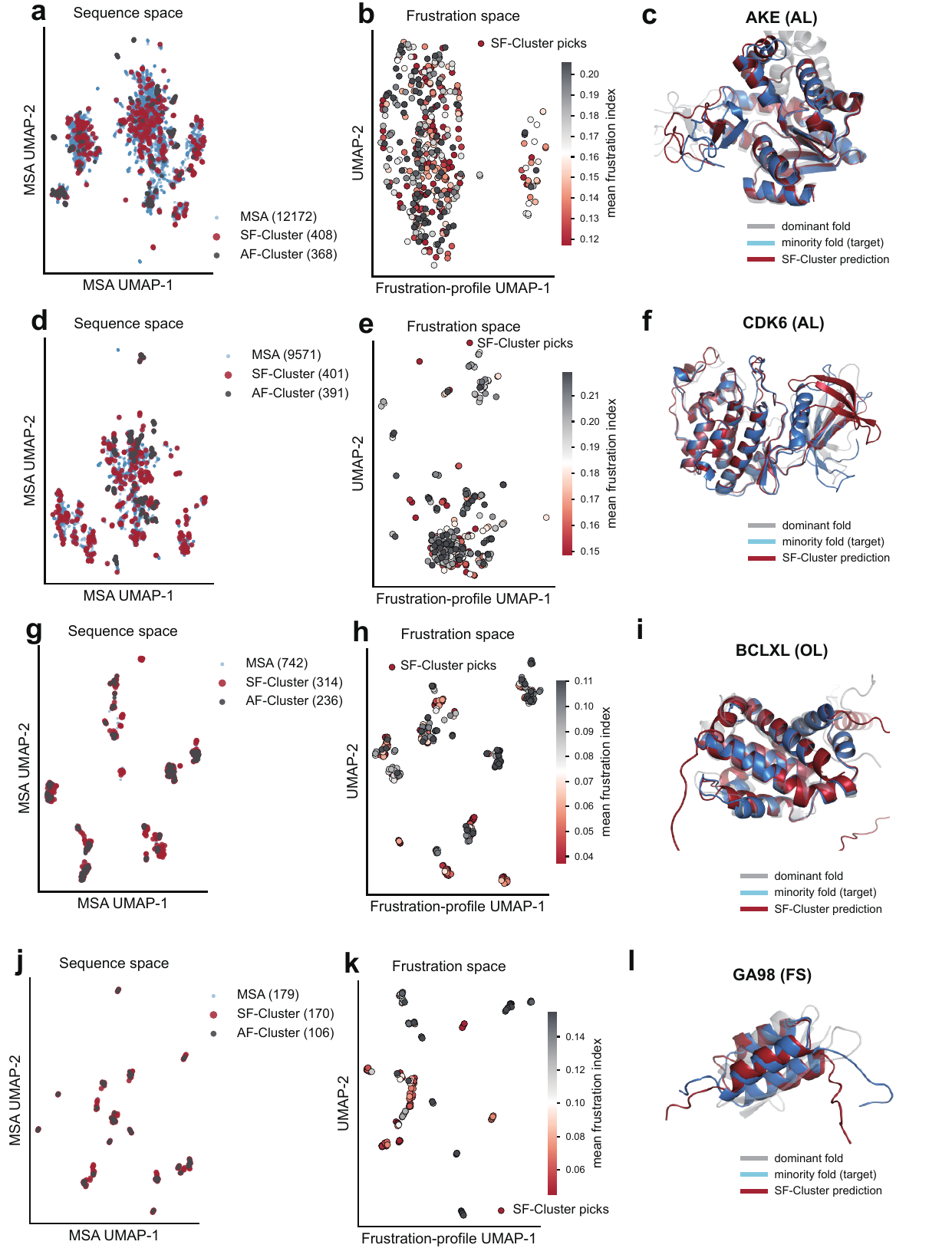}
\caption{\textbf{Selection geometry and structural-proximity examples for four representative cases from the benchmark.} Rows: AKE (alternate-loop, AL), CDK6 (AL), BCL-XL (occluded-loop, OL) and GA98 (fold-switch, FS); membership in the benchmark was asserted before plotting. \textbf{Left}, sequence-space UMAP of the full filtered MSA (gray cloud) with AF-Cluster picks (gray) and SF-Cluster picks (red) overlaid and subset counts annotated. \textbf{Middle}, UMAP of per-homolog frustration-index profiles, each homolog coloured by its mean frustration index on a sequential red ramp (5th--95th-percentile clip), with SF-Cluster picks outlined. \textbf{Right}, C$\alpha$ superposition of a representative SF-Cluster screen prediction, chosen to illustrate structural proximity to the target reference (the prediction of lowest full-chain RMSD to the target among the displayed subset), shown over the dominant fold (gray, transparent) and the target/minority fold reference (blue), with the SF-Cluster prediction in red. These panels illustrate structural proximity only; not all displayed predictions satisfy the AF-Cluster-style dual-reference target-hit criterion. UMAP parameters: seed $20260422$, \texttt{n\_neighbors}$=\min(15,\max(5,N/50))$, \texttt{min\_dist}$=0.3$.}
\label{fig:si_geometry}
\end{figure}

\end{document}